\newcolumntype{L}[1]{>{\raggedright\let\newline\\\arraybackslash\hspace{0pt}}m{#1}}
\newcolumntype{C}[1]{>{\centering\let\newline\\\arraybackslash\hspace{0pt}}m{#1}}
\newcolumntype{R}[1]{>{\raggedleft\let\newline\\\arraybackslash\hspace{0pt}}m{#1}}
\let\MYcaption\@makecaption
\let\@makecaption\MYcaption
\let\oldgls\gls
\let\oldglspl\glspl
\newcommand\fussy@ifnextchar[3]{%
  \let\reserved@d=#1%
  \def\reserved@a{#2}%
  \def\reserved@b{#3}%
  \futurelet\@let@token\fussy@ifnch}
\def\fussy@ifnch{%
  \ifx\@let@token\reserved@d
    \let\reserved@c\reserved@a 
  \else
    \let\reserved@c\reserved@b
  \fi
 \reserved@c}
\renewcommand{\gls}[1]{%
  \oldgls{#1}\fussy@ifnextchar.{\@checkperiod}{\@}}
\renewcommand{\glspl}[1]{%
  \oldglspl{#1}\fussy@ifnextchar.{\@checkperiod}{\@}}
\newcommand{\@checkperiod}[1]{%
  \ifnum\sfcode`\.=\spacefactor\else#1\fi
}
\newacronym{wrt}{w.r.t.}{with respect to}
\newacronym{RHS}{RHS}{right-hand side}
\newacronym{LHS}{LHS}{left-hand side}
\newacronym{iid}{i.i.d.}{independent and identically distributed}
\let\saved@bibitem\@bibitem\makeatother
\let\@bibitem\saved@bibitem\makeatother
\crefname{equation}{}{}
\Crefname{equation}{}{}
\crefname{claim}{claim}{claims}
\crefname{step}{step}{steps}
\crefname{line}{line}{lines}
\crefname{condition}{condition}{conditions}
\crefname{dmath}{}{}
\crefname{dseries}{}{}
\crefname{dgroup}{}{}
\crefname{Problem}{Problem}{Problems}
\crefname{Theorem}{Theorem}{Theorems}
\crefname{Corollary}{Corollary}{Corollaries}
\crefname{Proposition}{Proposition}{Propositions}
\crefname{Lemma}{Lemma}{Lemmas}
\crefname{Definition}{Definition}{Definitions}
\crefname{Example}{Example}{Examples}
\crefname{Assumption}{Assumption}{Assumptions}
\crefname{Remark}{Remark}{Remarks}
\crefname{Rem}{Remark}{Remarks}
\crefname{remarks}{Remarks}{Remarks}
\crefname{Appendix}{Appendix}{Appendices}
\crefname{Exercise}{Exercise}{Exercises}
\crefname{Theorem_A}{Theorem}{Theorems}
\crefname{Corollary_A}{Corollary}{Corollaries}
\crefname{Proposition_A}{Proposition}{Propositions}
\crefname{Lemma_A}{Lemma}{Lemmas}
\crefname{Definition_A}{Definition}{Definitions}
    \let\Cref\crtCref
    \let\cref\crtcref
\newtheorem{Theorem}{Theorem}
\newtheorem{Corollary}{Corollary}
\newtheorem{Proposition}{Proposition}
\newtheorem{Lemma}{Lemma}
\newtheorem{Theorem}{Theorem}
\newtheorem{Proposition}[theorem]{Proposition}
\newtheorem{Assumption}{Assumption}
\newtheorem{Lemma_A}{Lemma}[section]
\theoremstyle{remark}
\theoremstyle{plain}
\newenvironment{remarks}{
	\begin{list}{\textit{Remark} \arabic{Rem}:~}{
    \setcounter{enumi}{\value{Rem}}
    \usecounter{Rem}
    \setcounter{Rem}{\value{enumi}}
    \setlength\labelwidth{0in}
    \setlength\labelsep{0in}
    \setlength\leftmargin{0in}
    \setlength\listparindent{0in}
    \setlength\itemindent{15pt}
		}
}{
	\end{list}
}
\newcommand{\calA}{\mathcal{A}}
\newcommand{\calB}{\mathcal{B}}
\newcommand{\calD}{\mathcal{D}}
\newcommand{\calF}{\mathcal{F}}
\newcommand{\calG}{\mathcal{G}}
\newcommand{\calH}{\mathcal{H}}
\newcommand{\calI}{\mathcal{I}}
\newcommand{\calM}{\mathcal{M}}
\newcommand{\calN}{\mathcal{N}}
\newcommand{\calP}{\mathcal{P}}
\newcommand{\calR}{\mathcal{R}}
\newcommand{\calV}{\mathcal{V}}
\newcommand{\calW}{\mathcal{W}}
\newcommand{\calX}{\mathcal{X}}
\newcommand{\calY}{\mathcal{Y}}
\newcommand{\calZ}{\mathcal{Z}}
\DeclareSymbolFont{bsfletters}{OT1}{cmss}{bx}{n}
\DeclareSymbolFont{ssfletters}{OT1}{cmss}{m}{n}
\DeclareMathSymbol{\bsfGamma}{0}{bsfletters}{'000}
\DeclareMathSymbol{\ssfGamma}{0}{ssfletters}{'000}
\DeclareMathSymbol{\bsfDelta}{0}{bsfletters}{'001}
\DeclareMathSymbol{\ssfDelta}{0}{ssfletters}{'001}
\DeclareMathSymbol{\bsfTheta}{0}{bsfletters}{'002}
\DeclareMathSymbol{\ssfTheta}{0}{ssfletters}{'002}
\DeclareMathSymbol{\bsfLambda}{0}{bsfletters}{'003}
\DeclareMathSymbol{\ssfLambda}{0}{ssfletters}{'003}
\DeclareMathSymbol{\bsfXi}{0}{bsfletters}{'004}
\DeclareMathSymbol{\ssfXi}{0}{ssfletters}{'004}
\DeclareMathSymbol{\bsfPi}{0}{bsfletters}{'005}
\DeclareMathSymbol{\ssfPi}{0}{ssfletters}{'005}
\DeclareMathSymbol{\bsfSigma}{0}{bsfletters}{'006}
\DeclareMathSymbol{\ssfSigma}{0}{ssfletters}{'006}
\DeclareMathSymbol{\bsfUpsilon}{0}{bsfletters}{'007}
\DeclareMathSymbol{\ssfUpsilon}{0}{ssfletters}{'007}
\DeclareMathSymbol{\bsfPhi}{0}{bsfletters}{'010}
\DeclareMathSymbol{\ssfPhi}{0}{ssfletters}{'010}
\DeclareMathSymbol{\bsfPsi}{0}{bsfletters}{'011}
\DeclareMathSymbol{\ssfPsi}{0}{ssfletters}{'011}
\DeclareMathSymbol{\bsfOmega}{0}{bsfletters}{'012}
\DeclareMathSymbol{\ssfOmega}{0}{ssfletters}{'012}
\newcommand{\bbeta}{\bm{\beta}}
\newcommand{\bgamma}{\bm{\gamma}}
\newcommand{\bmu}{\bm{\mu}}
\newcommand{\bsigma}{\bm{\sigma}}
\newcommand{\bpsi}{\bm{\psi}}
\newcommand{\bPhi}{\bm{\Phi}}
\newcommand{\hsigma}{\widehat{\sigma}}
\DeclareMathOperator{\st}{s.t.}
\DeclareMathOperator{\diag}{diag}
\DeclareMathOperator{\Tr}{Tr}
\DeclareMathOperator{\rank}{rank}
\DeclareMathOperator{\vect}{vec}
\DeclarePairedDelimiter\parens{(}{)}
\DeclarePairedDelimiter\brk{[}{]}
\DeclarePairedDelimiter\braces{\{}{\}}
\newcommand{\qednew}{\nobreak \ifvmode \relax \else
      \ifdim\lastskip<1.5em \hskip-\lastskip
      \hskip1.5em plus0em minus0.5em \fi \nobreak
      \vrule height0.75em width0.5em depth0.25em\fi}
\newcommand{\nn}{\nonumber\\}
\newcommand{\T}{^{\intercal}}
\newcommand{\bone}{\mathbf{1}}
\newcommand{\norm}[1]{{\left\lVert{#1}\right\rVert}}
\newcommand{\trace}[1]{{\Tr\left( #1 \right)}}
\newcommand{\col}[1]{\operatorname{col}\left\{{#1}\right\}}
\newcommand{\row}[1]{\operatorname{row}\left\{{#1}\right\}}
\DeclareDocumentCommand\set{s m t| m}{%
  \IfBooleanTF#1%
	{\left\{\, #2\mathrel{} \IfBooleanTF{#3}{\middle|}{:}\mathrel{}  #4\, \right\}}%
  {\{\, #2 \IfBooleanTF{#3}{\mid}{\mathrel{} : \mathrel{}} #4\, \}}%
}
\DeclareDocumentCommand \ifcond {m m} {%
	{#1} %
	\IfValueT{#2}{\, \middle|\, {#2}}%
}
\DeclareDocumentCommand \P {e{_} g >{\SplitArgument{ 1 }{ @| }}d() g } {%
	\mathbb{P}%
	\IfValueTF{#1}{_{#1}}
		{\IfValueT{#2}{_{#2}}}%
	\IfValueT{#3}{\left(\ifcond#3}%
	\IfValueT{#4}{\, \middle|\, {#4}}%
	\IfValueT{#3}{\right)}%
}
\DeclareDocumentCommand \E {e{_} g >{\SplitArgument{ 1 }{ @| }}o g } {%
	\mathbb{E}%
	\IfValueTF{#1}{_{#1}}
		{\IfValueT{#2}{_{#2}}}%
	\IfValueT{#3}{\left[\ifcond#3}%
	\IfValueT{#4}{\, \middle|\, {#4}}%
	\IfValueT{#3}{\right]}%
}
\definecolor{gray90}{gray}{0.9}
	\newcommand{\msout}[1]{\text{\color{green} \sout{\ensuremath{#1}}}}
	\newcommand{\del}[1]{{\color{green}\ifmmode \msout{#1}\else\sout{#1}\fi}}
	\newcommand{\msout}[1]{#1}
	\newcommand{\del}[1]{#1}
\newcommand{\hhide}[1]{}
	 \def\@testdef #1#2#3{%
		 \def\reserved@a{#3}\expandafter \ifx \csname #1@#2\endcsname
		\reserved@a  \else
	 \typeout{^^Jlabel #2 changed:^^J%
	 \meaning\reserved@a^^J%
	 \expandafter\meaning\csname #1@#2\endcsname^^J}%
	 \@tempswatrue \fi}
\def\bmd{{\bm d}}
\def\bmg{{\bm g}}
\def\bmn{{\bm n}}
\def\bmq{{\bm q}}
\def\bmw{{\bm w}}
\def\bmu{{\bm u}}
\def\bmv{{\bm v}}
\def\bmr{{\bm r}}
\def\bmR{{\bm R}}
\newcommand{\hbmw}{\widehat{\bmw}}
\def\msd{{\mbox{MSD}}}
\def\net{{\mbox{net}}}
\newcommand{\tbpsi}{\widetilde{\bpsi}}
\newcommand{\tbmw}{\widetilde{\bmw}}
\newcommand{\ft}[1]{\scriptstyle{\textrm{#1}}}
\newcommand{\fr}[1]{\scriptstyle{\cref{#1}}}
\begin{document}

\title{Privacy-Preserving Distributed Projection LMS for Linear Multitask Networks}
\author{Chengcheng~Wang,~\IEEEmembership{Member,~IEEE}, Wee~Peng~Tay,~\IEEEmembership{Senior~Member,~IEEE}, Ye~Wei,~\IEEEmembership{Member,~IEEE}, and Yuan~Wang%
	\thanks{This work was supported in part by the Singapore Ministry of Education Academic Research Fund Tier 2 grant MOE2018-T2-2-019 and by A*STAR under its RIE2020 Advanced Manufacturing and Engineering (AME) Industry Alignment Fund – Pre Positioning (IAF-PP) (Grant No. A19D6a0053).}%
\thanks{C. Wang, W. P. Tay and Y. Wang are with Nanyang Technological University, Singapore. Y. Wei is with Northeast Electric Power University, China. E-mails: \texttt{wangcc@ntu.edu.sg}; \texttt{wptay@ntu.edu.sg}; \texttt{weiye@hrbeu.edu.cn}; \texttt{wangyuan@ntu.edu.sg}. 
}
}                                                                                                           
\maketitle 

\begin{abstract}       
We develop a privacy-preserving distributed projection least mean squares (LMS) strategy over linear multitask networks, where agents' local parameters of interest or tasks are linearly related. Each agent is interested in not only improving its local inference performance via in-network cooperation with neighboring agents, but also protecting its own individual task against privacy leakage. In our proposed strategy, at each time instant, each agent sends a noisy estimate, which is its local intermediate estimate corrupted by a zero-mean additive noise, to its neighboring agents. We derive a sufficient condition to determine the amount of noise to add to each agent's intermediate estimate to achieve an optimal trade-off between the network mean-square-deviation and an inference privacy constraint. We propose a distributed and adaptive strategy to compute the additive noise powers, and study the mean and mean-square behaviors and privacy-preserving performance of the proposed strategy. Simulation results demonstrate that our strategy is able to balance the trade-off between estimation accuracy and privacy preservation.
\end{abstract}                                                       
\begin{IEEEkeywords}   
Distributed strategies, multitask networks, inference privacy, privacy preservation, additive noises.
\end{IEEEkeywords} 

\section{Introduction}\label{Sect: introduction}
In multitask distributed networks, a set of interconnected agents work collaboratively to estimate different but related parameters of interest \cite{CheRicSay:J14,CheRicSay:J15,LengJSP0115,PlaBogBer:J15,NasRicFer:J16,CheRicSay:J17,NasRicFer:J17,WanTayHu:J17}. In order to make use of the relationship between different tasks for better inference performance, local estimates are exchanged amongst agents within the same neighborhood. However, each agent may wish to protect its own local parameters of interest and prevent other agents in the network from accurately inferring these parameters. For example, in an Internet of Things (IoT) network, sensors are deployed in smart grids, traffic monitoring, health monitoring, home monitoring and other applications \cite{ChanCampoEsteveEtAl2009,whitmore2015internet,shah2016survey,li20185g}. Although different IoT or edge computing devices may have their local objectives, they can exchange information with each other or service providers \cite{HuTay:J15,TayTsiWin:J08b,Tay:J15,HoTayQue:J15} to improve inferences and services. However, sharing information may lead to privacy leakage.

To protect the privacy of data being exchanged between agents in a distributed network, the works \cite{wang2003using,xiong2016randomized,sarwate2014rate,liao2017hypothesis,imtiaz2018differentially,ren2018textsf} propose local differential privacy mechanisms, while \cite{razeghiprivacy,mcsherry2009privacy,dwork2006our,duchi2013local} develop privacy-preserving distributed data analytics. However, these approaches often lead to significant trade-offs in estimation accuracy as they do not specifically protect the privacy of the parameters of interest. Privacy-preserving average consensus algorithms in \cite{MoMur:J17,NozTalCor:J17,HeCaiGua:J18,HeCaiZha:J19,WanHeChe:J19} aim to guarantee the privacy of initial states while achieving asymptotic consensus on the average of the initial values. To achieve inference privacy in a decentralized IoT network, \cite{SunTayHe:J18,HeTayHua:J19,SunTay:J20a,SunTay:J20b,WanSonTay:J20} propose approaches with information privacy guarantees, while \cite{diamantaras2016data,KunSPM2017,KunJFI2017,SonWanTay:C18,WanYanTay:C18,SonWanTay:J20,LauTay:C20} map the agents' raw observations into a lower dimensional subspace. These inference privacy works assume that all agents in the network are interested in inferring the same parameters or hypothesis of interest. 

Existing works on multitask distributed networks mainly focus on developing new distributed strategies for automatic clustering \cite{zhaSay:J15,CheRicSay:J15} and handling different relationships among the tasks \cite{CheRicSay:J14,PlaBogBer:J15,NasRicFer:J16,CheRicSay:J17,NasRicFer:J17,WanTayHu:J17}. Other works like \cite{CheSay:J13,Say:J14,NasRicChe:C16,CheRicSay:J15} focus on evaluating the performance of different distributed schemes. Few works have considered protecting the privacy of each agent's local parameters. The reference \cite{HarFlaRic:C16} considers data privacy of the agents' local measurements in a single-task network. 

Our objective is to develop a privacy-preserving distributed projection least mean squares (LMS) strategy over multitask networks, which balances the trade-off between estimation accuracy and privacy preservation of agents' local parameters or tasks. Specifically, we consider multitask estimation problems where the unknown parameters of interest within each neighborhood are linearly related with each other \cite{NasRicFer:J17}. Such problems exist widely in applications such as electrical networks, telecommunication networks, and pipeline networks. Different from the strategy in \cite{NasRicFer:J17}, which does not take privacy preservation into consideration, we propose to sanitize each agent's intermediate estimate before sharing it with its neighbors by adding an appropriate zero-mean noise to the intermediate estimate. We study how to design the power of the noise to be added to optimize the trade-off between the network mean-square-deviation (MSD) and the inference privacy of each agent's local parameters, measured by its neighbors' mean-square error in estimating the agent's local parameters. In addition, different from existing works on distributed strategies in the presence of link noises \cite{Say:J14,NasRicChe:C16,KhaTinRas:J12,ZhaTuSay:J12}, which examine the effect of link noises on the performance of distributed strategies, this work focuses on developing agent-specific time-varying variances of the additive noises that enable agents to benefit from in-network cooperation as well as to protect their own individual tasks against privacy leakage.

The rest of this paper is organized as follows. In \cref{Sect: Multitask Network}, we introduce the system assumptions and the multitask estimation problem considered in this paper. In \cref{Sect: Privacy-preserving Diffusion Strategy}, we propose a privacy-preserving multitask distributed strategy by introducing privacy noises and show how to obtain the noise powers in \cref{Sect: Privacy Mechanism Noise Design and Convergence Analysis}. We examine the mean and mean-square behaviors and privacy-preserving performance of the proposed strategy in \cref{Sect: Performance Analysis}. We present simulation results in \cref{Sec:Simulation results}. \cref{Sect: Conclusion} concludes the paper.

\emph{Notations:} We use lowercase letters to denote vectors and scalars, uppercase letters for matrices, plain letters for deterministic variables, and boldface letters for random variables. We use $(\cdot)\T$, $(\cdot)^{-1}$, $\trace{\cdot}$ and $\rank\parens*{\cdot}$ to denote matrix transpose, inversion, trace and rank, respectively. We use $\diag\{\cdot\}$, $\col{\cdot}$ and $\row{\cdot}$ for a diagonal matrix, column vector and row vector, respectively. The $M\times M$ identity matrix is denoted as $I_M$ and $0_{M\times N}$ denotes an $M\times N$ zero matrix. We have $\norm{x}^2_\Sigma=x\T\Sigma x$ with $\Sigma$ being a symmetric positive semidefinite matrix, and let $\norm{x}^2=\norm{x}_{I_M}^2$ where $M$ is the length of $x$. In addition, we use $\rho\parens{A}$ to denote the spectral radius of matrix $A$, and $\lambda_{\min}(A)$ and $\lambda_{\max}(A)$ for the minimum and maximum eigenvalues of the symmetric matrix $A$, respectively. For a block matrix $A$ whose $(k,\ell)$-th block represents some interaction between agents $k$ and $\ell$, we let $[A]_k$ and $[A]_{\cdot,k}$ denote the block row and block column of $A$ corresponding to agent $k$, respectively, and $[A]_{k,\ell}$ be the $(k,\ell)$-th block.

\section{Linear Multitask Networks}\label{Sect: Multitask Network}

In this section, we present our system model, and give a brief introduction to multitask networks, where neighboring agents' tasks are linearly related. Consider a strongly-connected network of $N$ agents, where information can flow in either direction between any two connected agents \cite{Say:J14}. At each time instant $i\geq0$, each agent $k$ has access to a scalar observation $\bmd_k(i)$, and an $M_k\times 1$ regression vector $\bmu_{k}(i)$. The random data $\{\bm d_k(i),\bm u_{k}(i)\}$ are related via the linear regression model 
\begin{align}\label{model}
\bmd_k(i)=\bmu_{k}\T(i)\bmw_k^o+\bmv_k(i)
\end{align}
where the scalar $\bm v_k(i)$ is measurement noise, $\bmw_k^o$ is an $M_k\times 1$ unknown random vector, with mean $\E\bmw_k^o$ and covariance matrix 
\begin{align}\label{Wkk}
W_{kk} = \E[(\bmw^o_k-\E\bmw^o_k)(\bmw^o_k-\E\bmw^o_k)\T].
\end{align}
Although we assume that the parameter vector $\bmw_k^o$ is random instead of being a deterministic parameter vector, like most of the literature on distributed strategies \cite{CheRicSay:J15,CheRicSay:J17,CheRicSay:J14,CheSay:J13,HarFlaRic:C16,KhaTinRas:J12,NasRicChe:C16,NasRicFer:J17,NasRicFer:J16,PlaBogBer:J15,ZhaSay:C12,zhaSay:J15,ZhaTuSay:J12,Say:J14}, we assume that the parameter vector $\bmw_k^o$ is fixed at a certain realization $w_k^o$ during the distributed estimation process. Since our goal is to develop inference privacy mechanisms that lead to high estimation errors, on average, of agent $k$'s local parameters $\bmw_k^o$ by other agents $\ell\ne k$, we adopt a Bayesian framework for the privacy criterion.

We make the following assumptions regarding model (\ref{model}).

\begin{Assumption}\label{Ass:measurement_noise}(Measurement noise)
The measurement noise $\bm v_k(i)$ is white over time, with zero mean, and a variance of $\sigma_{v,k}^2$. 
\end{Assumption}
\begin{Assumption}\label{Ass:Regression_data}(Regressors)
The regressors $\{\bm u_{k}(i)\}$ are zero-mean, white over time and space with
\begin{align}\label{Ruk}
\E\bm u_{k}(i)\bm u_{\ell}\T(j)= R_{u,k}\delta_{k,\ell}\delta_{i,j},
\end{align}
where $R_{u,k}$ is symmetric positive definite, and $\delta_{k,\ell}$ is the Kronecker delta.
\end{Assumption}
\begin{Assumption}\label{Ass:Independence_assumption}
(Independence) The random data $\{\bmw_k^o,\bmu_\ell(i),\bmv_m(j)\}$ are independent of each other for any agents $k,\ell,m$ and any time instants $i,j$.
\end{Assumption}

For a realization $\{\bmw_k^o=w_k^o \mid k=1,\ldots,N\}$, the objective of each agent $k$ is to find the minimizer of the following mean-square-error cost function:
\begin{align}\label{Eq:ind_cost}
J_k(w_k) = \E[(\bmd_k(i)-\bmu_k\T(i)w_k)^2]{\bmw_k^o=w_k^o}.
\end{align}
Let $\calN_k$ be the set of all neighbors of agent $k$, including agent $k$ itself, and 
\begin{align*}
\bmw_{\calN_k}^o=\col{\bmw_\ell^o}_{\ell\in\calN_k}.
\end{align*}
Following \cite{NasRicFer:J17}, we assume that neighboring tasks $\{\bmw_\ell^o \mid \ell\in\calN_k\}$ are related via 
\begin{align}\label{Eq:local_const}
\calD_{k} \bmw^o_{\calN_k} + b_k =0,
\end{align}
where the matrix $\calD_k$ is a $j_k\times |\calN_k|$ block matrix, and the vector $b_k$ is a $j_k\times 1$ block vector, with $j_k\geq1$. Then, the objective for the entire network is to find the optimal solution to the following constrained optimization problem \cite{NasRicFer:J17}: 
\begin{subequations}\label{Eq:J_opt}
\begin{align}
\min_{w_1,\ldots,w_N}\ & \sum_{k=1}^N J_k(w_k)\\
\st\ & w_{\calN_k}=\col{w_\ell}_{\ell\in\calN_k},\\
& \calD_{k} w_{\calN_k} + b_k =0,\ \text{for $k=1,\ldots,N$,}\label{Eq:cost_constraints} 
\end{align}
\end{subequations}
where individual costs $\{J_k(w_k)\}$ are defined by \cref{Eq:ind_cost}. We assume that the optimization problem \cref{Eq:J_opt} is feasible \cite{NasRicFer:J17}. We assume that the matrix $\calD_k$ for any agent $k$ is full row-rank so that the linear equation \cref{Eq:cost_constraints} has at least one solution. To avoid having a trivial solution, we also assume that it is column-rank deficient because otherwise \cref{Eq:cost_constraints} has a unique solution.

As demonstrated in \cite{NasRicFer:J17}, each agent $k$ benefits through cooperation with neighboring agents by sharing its local parameter estimate $\bpsi_k(i)$ with its neighbors at each time instant $i$. By leveraging the linear relationships \cref{Eq:local_const} and its neighbors' parameter estimates, an agent can improve its own inference accuracy. In this paper, we consider the scenario where agent~$k$ also wants to prevent other agents from inferring its own task $\bmw_k^o$. Thus, a privacy-preserving distributed solution is required to balance the trade-off between estimation accuracy and privacy protection of the individual tasks. In order to limit privacy leakage, we add a zero-mean, \gls{iid} noise vector $\bm n_k(i)$ to agent~$k$'s local parameter estimate $\bpsi_k(i)$, which is a local linear estimate of $\bmw_k^o$ (cf.\ \cref{projection,Eq:psi'_l,adaptation}), before communicating $\bpsi'_k(i)=\bpsi_k(i)+{\bm n_k}(i)$ to neighboring agents. Let $\sigma_{n,k}^2(i)$ be the variance of each random entry in $\bm n_k(i)$. We call $\bmn_k(i)$ a \emph{privacy noise}. 

Our objective is to find the optimal solution to the following optimization problem:
\begin{subequations}\label{Eq:privacy_req}
\begin{align}
\min_{\sigma_{n,1}^2(i),\ldots,\sigma_{n,N}^2(i)}\ &\ \msd_\net(i) =\frac{1}{N}\sum_{k=1}^N\E\norm{\bmw_k^o-\bmw_{k}(i)}^2,\label{Eq:MSD_net}\\
\st\ &\ \E\norm{\bmw_k^o-\hbmw_{k|\bpsi'_k}(i)}^2\geq \delta_k,\label{Eq:privacy_cost}
\end{align}
\end{subequations}
for $k=1,\ldots,N, i\geq 0$, and where $\hbmw_{k|\bpsi'_k}(i)$ is the least mean-square estimate of $\bmw_k^o$ at time instant $i$ based on $\bpsi'_k(i)$, and $\delta_k\geq 0$ is a privacy threshold chosen according to privacy requirements. Note that the privacy thresholds have to be chosen such that
\begin{align}\label{Ineq:delta_k_range}
0\leq\delta_k \leq \trace{W_{kk}},
\end{align}
otherwise \cref{Eq:privacy_req} is infeasible. 

\begin{remarks}
\item In \cref{Eq:privacy_cost}, it is required that at each time instant $i\geq 0$, the \emph{expected} squared distance $\E\norm{\bmw_k^o-\hbmw_{k|\bpsi'_k}(i)}^2$ over all realizations of $\bmw_k^o$ is no smaller than the predefined parameter $\delta_k$. This provides an inference privacy constraint on the ability of a neighboring agent to agent $k$ in accurately estimating $\bmw_k^o$ based on observation $\bpsi'_k(i)$ on average.

\item In \cref{Eq:privacy_cost}, the estimator $\hbmw_{k|\bpsi'_k}(i)$ is based only on the noisy estimate $\bpsi'_k(i)$. For each neighboring agent $\ell\in\calN_k$, it has access to not only the received noisy estimate $\bpsi'_k(i)$ from agent $k$, but also its own intermediate estimate $\bpsi_\ell(i)$ of $\bmw_\ell^o$. Both of these estimates can be used to infer $\bmw_k^o$ as the unknown parameters $\bmw_k^o$ and $\bmw_\ell^o$ are linearly related with each other through \cref{Eq:cost_constraints}. In this paper, to simplify the analysis and to ensure that the sequence of noise variances $\{\sigma_{n,k}^2(i)\}$ at each agent $k$ is bounded and convergent, we only consider the simplified case in \cref{Eq:privacy_req}. We illustrate the case where a neighboring agent uses its own estimates as additional information using simulations in \cref{Sec:Simulation results}.
\end{remarks}

\section{Privacy-preserving Distributed Projection LMS}\label{Sect: Privacy-preserving Diffusion Strategy}

In this section, we propose an inference privacy mechanism to protect each agent's local task by adding noise to its intermediate estimate before sharing with its neighbors. We then introduce a \emph{weighted} projection operator, which projects neighbors' noisy estimates onto the linear manifold defined by local constraints (\ref{Eq:local_const}), in order to mitigate the negative effect of the additive noises on the estimation accuracy of individual tasks.

\subsection{Adapt-Then-Project Strategy}

In our privacy-preserving distributed projection LMS algorithm, we initialize $\bm w_{k}(-1)=0$ for every agent $k$ in the network. Given data $\{\bm d_k(i),\bm u_{k}(i)\}$ for each time instant $i\geq0$, and for each agent $k=1,\ldots,N$, we perform the following steps iteratively:
\begin{enumerate}
\item Adaptation. Each agent $k$ updates its current estimate $\bm w_{k}(i-1)$ using the stochastic gradient descent (SGD) algorithm to obtain an intermediate estimate
\begin{align}\label{adaptation}
\bm\psi_{k}(i) = \bm w_{k}(i-1) \!+\!\mu_k\bm u_{k}(i)\left(\bm d_k(i)\!-\!\bm u\T_{k}(i)\bm w_{k}(i-1)\right),
\end{align}
where $\mu_k>0$ is the step size at agent $k$.

\item Exchange. Each agent $k$ sends a noisy estimate
\begin{align*}
 \bm\psi'_{k}(i)=\bm\psi_{k}(i) +\bm n_{k}(i),   
\end{align*}
where the random additive noise vector $\bmn_{k}(i)$ is of dimension $M_k\times 1$, to its neighbors and collects estimates $\{\bm\psi'_{\ell}(i)\}$ from neighboring agents $\{\ell\in\calN_k\}$:
\begin{align}\label{Eq:psi'_l}
\bm\psi'_{\ell}(i) =\left\{
\begin{array}{ll}
\bm\psi_{\ell}(i) +\bm n_{\ell}(i),&\textrm{if $\ell\in\calN_k\backslash\{k\}$},\\
\bm\psi_{k}(i),&\textrm{if $\ell=k$}.
\end{array}
\right.
\end{align}

\item Projection. Each agent $k$ projects the estimates $\{\bm\psi'_{\ell}(i)\}_{\ell\in\calN_k}$ received from its neighborhood onto the linear manifold $\{ w_{\calN_k}\mid\calD_{k} w_{\calN_k} + b_k =0\}$ to obtain
\begin{align}\label{projection} 
\bmw_{k}(i) =  [\calP_{\calN_k}(i)]_k \col{\bm\psi'_{\ell}(i)}_{\ell\in\calN_k} - [f_{\calN_k}(i)]_{k},
\end{align}
where the matrix $\calP_{\calN_k}(i)$ and vector $f_{\calN_k}(i)$ are defined in \cref{Eq:f_q,calP}, respectively, in the sequel. 
\end{enumerate}

Let $\delta = \col{\delta_k}_{k=1}^N$, where the non-negative numbers $\{\delta_k\}$ are the privacy thresholds in \cref{Eq:privacy_cost}. We call the proposed scheme \cref{adaptation,Eq:psi'_l,projection} \emph{adapt-then-project} with privacy parameter $\delta$ or ATP$(\delta)$ in short. Specifically, we use ATP(0) algorithm to denote the case where there is no privacy constraint, \emph{i.e.,} no additive noises are introduced in the exchange step in \cref{Eq:psi'_l}. 

\begin{remarks} 
\item The differences between the proposed ATP$(\delta)$ algorithm \cref{adaptation,projection,Eq:psi'_l} and the scheme in \cite{NasRicFer:J17} are in the exchange and projection steps. Specifically, in order to protect each individual task $\bmw_k^o$ against privacy leakage, each agent $k$ sends a noisy intermediate estimate $\bpsi'_k(i)$, instead of the true estimate $\bpsi_k(i)$ as in \cite{NasRicFer:J17}, to its neighboring agents. In addition, in the projection step \cref{projection}, agent $k$ projects its neighboring estimates onto the linear manifold corresponding to the intersection of all $j_k$ linear equality constraints that it is involved in. In contrast, in \cite{NasRicFer:J17},  neighboring estimates of agent $k$ are projected onto the linear manifold corresponding to each of the $j_k$ constraints separately, which generates $j_k$ intermediate estimates, and the new estimate is taken as the average of these intermediate estimates. Moreover, in order to mitigate the negative effect of privacy noises on the projection step, we introduce \emph{weighted} projection operators in \cref{Eq:cost_projection} in the sequel. 
\end{remarks}

To allow a distributed implementation of the privacy mechanism, we make the following assumption.
\begin{Assumption}\label{Ass:additive_noise}(Privacy noise)
The entries of $\bmn_{k}(i)$ at time instant $i$, for any $k=1,\ldots,N$, are \acrshort{iid} with zero mean and variance $\sigma_{n,k}^2(i)$, i.e., 
\begin{align}\label{Eq:R_nk_i}
    R_{n,k}(i) \triangleq \E[\bmn_k(i)\bmn_k\T(i)]=\sigma_{n,k}^2(i)I_{M_k}.
\end{align}
The random noises $\{\bmn_{k}(i)\}$ are white over time and space. The random process $\{\bmn_k(i)\}$ is independent of any other random processes.
\end{Assumption}
From \cref{Ass:additive_noise}, each agent $k$ generates the noise $\bmn_k(i)$ independently of other agents in the network, and also independently over time instants $i\geq0$. Based on this assumption, we now proceed to introduce the weighted projection operator $\calP_{\calN_k}(i)$ and vector $f_{\calN_k}(i)$ involved in \cref{projection}.

\subsection{Weighted Projection Operator}
For each agent $k$, let us collect the noisy intermediate estimates $\{\bpsi'_{\ell}(i)\}$ defined by \cref{Eq:psi'_l} from its neighboring agents $\{\ell\in\calN_k\}$ into an $|\calN_k|\times 1$ block column vector
\begin{align*}
 \bpsi'_{\calN_k}(i) = \col{\bpsi'_{\ell}(i)}_{\ell\in\calN_k}.  
\end{align*}
Then, for each random realization $\bpsi'_{\calN_k}(i)=\psi'_{\calN_k}(i)$, we are interested in seeking the optimal solution to the following optimization problem: 
\begin{subequations}\label{Eq:cost_projection}
\begin{align}
\min_{w_{\calN_k}(i)}\ &\norm{\psi'_{\calN_k}(i)-w_{\calN_k}(i)}^2_{\Omega_{\calN_k}^{-1}(i)}\label{Eq:cost_projection_cost}\\
\st\ &\calD_{k}w_{\calN_k}(i)+b_k=0,
\end{align}
\end{subequations}
where the weight matrix
\begin{align}\label{Eq:weight_matrix}
\Omega_{\calN_k}^{-1}(i) =\diag\left\{\omega_{\ell k }(i)I_{M_\ell}\right\}_{\ell\in\calN_k},
\end{align}
with 
\begin{align}\label{Eq:weights}
  \omega_{\ell k}(i)>0\,\textrm{for $\ell\in\calN_k$, and} \sum_{\ell\in\calN_k}\omega_{\ell k}(i) = 1.  
\end{align}
We note that $\Omega_{\calN_k}^{-1}(i)$ is an $|\calN_k|\times |\calN_k|$ block-diagonal, symmetric and positive definite matrix. We assume that the weights $\{\omega_{\ell k}(i)\}$ defined by \cref{Eq:weight_matrix} converge as $i\to\infty$. Let
\begin{align*}
M_{\calN_k} = \sum_{\ell\in\calN_{k}}M_\ell,
\end{align*}
and
\begin{align}
\calP_{\calN_k}(i)&= I_{M_{\calN_k}} - \Omega_{\calN_k}(i)\calD_k\T(\calD_k \Omega_{\calN_k}(i)\calD_k\T)^{-1}\calD_k \label{calP}\\
f_{\calN_k}(i)&=\Omega_{\calN_k}(i)\calD_k\T(\calD_k\Omega_{\calN_k}(i)\calD_k\T)^{-1}b_k\label{Eq:f_q}.
\end{align}
It follows that matrix $\calP_{\calN_k}(i)$ is a projection matrix. From \cite{Ber:B16}, the minimizer of \cref{Eq:cost_projection} is given by 
\begin{align}\label{Eq:w_k_e_i}
w_{\calN_k}(i)= \calP_{\calN_k}(i)\psi'_{\calN_k}(i) - f_{\calN_k}(i).
\end{align}
For each $\ell\in\calN_k$, we are interested in the projection operator applied to the intermediate estimate from agent $\ell$. This corresponds to an $M_k \times M_\ell$ block of $\calP_{\calN_k}(i)$, denoted as $\brk{\calP_{\calN_k}(i)}_{k,\ell}$. We collect all these in an $N\times N$ block matrix $\mathcal{P}(i)$, whose $M_k\times M_\ell$ $(k,\ell)$-th block equals 
\begin{align}\label{Eq:Pe}
[\mathcal{P}(i)]_{k,\ell} =\left\{\begin{array}{ll}
\brk*{\calP_{\calN_k}(i)}_{k,\ell},& \textrm{if $\ell\in\calN_k$}, \\
0_{M_k\times M_\ell},&\textrm{otherwise}.
\end{array}
\right.
\end{align}
Similarly, we collect the block $\brk{f_{\calN_k}(i)}_k$ corresponding to agent $k$ and define an $N\times 1$ block vector
\begin{align}\label{Eq:f(i)}
    f(i) = \col{\brk{f_{\calN_k}(i)}_k}_{k=1}^N.
\end{align}

\subsection{Weight Matrix}
The matrix $\Omega_{\calN_k}^{-1}(i)$ is introduced in order to mitigate the negative effect of the privacy noises $\{\bmn_k(i)\}$ on the projection step \cref{projection}. In the special case where
\begin{align*}
   \Omega_{\calN_k}^{-1}(i) =  \diag\left\{\frac{1}{|\calN_k|}I_{M_\ell}\right\}_{\ell\in\calN_k},
\end{align*}
the optimal solution of \cref{Eq:cost_projection} is reduced to the Euclidean projection of $\psi'_{\calN_k}(i)$ on the affine set $\braces*{ w_{\calN_k}\mid\calD_{k} w_{\calN_k} + b_k =0}$ \cite[p.398]{BoyVan:B04}. We proceed to rewrite \cref{Eq:cost_projection_cost} as
\begin{align}\label{cost_re}
\norm{\psi'_{\calN_k}(i)-w_{\calN_k}(i)}^2_{\Omega_{\calN_k}^{-1}(i)} = \sum_{\ell\in\calN_k}\omega_{\ell k}(i)\norm{\psi'_{\ell}(i)-w_{\ell}(i)}^2.    
\end{align}
If the privacy noise power $\sigma_{n,\ell}^2(i)$ is large, a small weight $\omega_{\ell k}(i)$ should be assigned to the squared Euclidean distance $\|\psi'_{\ell}(i)-w_{\ell}(i)\|^2$ (see \cref{Eq:omega_lk_i} in \cref{Sec:Simulation results} for a possible choice for the weights $\{\omega_{\ell k}(i)\}$. Thus in the extreme case where $\sigma_{n,\ell}^2(i)\to\infty$ for all $\ell\in\calN_k\backslash\{k\}$, we have $\omega_{\ell k}(i)\to 0$ for all $\ell\in\calN_k\backslash\{k\}$, and $\omega_{k k}(i)\to 1$. Then, from \cref{cost_re}, the optimization problem in \cref{Eq:cost_projection} is reduced to
\begin{subequations}\label{Eq:simplified_projection}
\begin{align}
\min_{w_{\calN_k}(i)}\ &\norm{\psi_{k}(i)-w_{k}(i)}^2\\
\st\ &\calD_{k}w_{\calN_k}(i)+b_k=0.\label{Eq:local constraint}
\end{align}
\end{subequations}
The optimal solution to \cref{Eq:simplified_projection} is then given by
\begin{align*}
    w_k(i)=\psi_k(i),
\end{align*}
and $\{w_\ell(i)\}$ for all $\ell\in\calN_k\backslash\{k\}$ are chosen such that 
\begin{align*}
    \calD_{k}w_{\calN_k}(i)+b_k=0.
\end{align*}
In this case, the proposed ATP($\delta$) algorithm is reduced to the non-cooperative LMS algorithm, namely \cite[p.165]{Say:B08}:
\begin{align}\label{Eq:LMS}
\bmw_{k}(i) = \bm w_{k}(i-1) +\mu_k\bm u_{k}(i)\left(\bm d_k(i)-\bm u\T_{k}(i)\bm w_{k}(i-1)\right).
\end{align}

\subsection{Network Error Dynamics}

To facilitate our analysis in later sections, we derive the network error dynamics in this subsection. Let $M =\sum_{k=1}^N M_k$, and
\begin{align}
\calM &=\diag\{\mu_k I_{M_k}\}_{k=1}^N,\label{Eq:calM}\\
\bmw(i) &= \col{\bmw_{k}(i)}_{k=1}^N,\nn
\bmw^o &= \col{\bm w_{k}^o}_{k=1}^N,\label{Eq:weo}\\
\tbmw(i) &= \bmw^o - \bmw(i),\nn
\bpsi(i) &= \col{\bpsi_k(i)}_{k=1}^N,\label{Eq:psi(i)}\\
\tbpsi(i) &=\bmw^o - \bpsi(i),\label{Eq:tbpsi(i)}\\
\bm\calR_{u}(i) &= \diag\left\{\bm u_{k}(i)\bm u_k\T(i)\right\}_{k=1}^N,\label{Eq:calR_ue_i}\\
\calR_{u} &= \E[\bm\calR_{u}(i)] =\diag\left\{R_{u,k}\right\}_{k=1}^N,\label{Eq:calRue}\\
\bmn(i) &= \col{\bmn_k(i)}_{k=1}^N,\\
R_{n}(i)&= \E[\bmn(i)\bmn\T(i)] = \diag\left\{ R_{n,k}(i)\right\}_{k=1}^N,\label{Eq:R_ne}\\
\bmg(i) &=\col{\bmu_k(i)\bmv_k(i)}_{k=1}^N, \label{Eq:calG_i}\\
\calG &= \E[\bmg(i)\bmg\T(i)] = \diag\left\{R_{u,k}\sigma_{v,k}^2\right\}_{k=1}^N.\label{Eq:calG}
\end{align}
Let $[\calP(i)]_{k^-}$ be the $k$-th block row of $\calP(i)$ defined by \cref{{Eq:Pe}} by setting $[\mathcal{P}(i)]_{k,k}=0_{M_k\times M_k}$, 
\begin{align}
\bmq_{k}(i) &= [\calP(i)]_{k^-}\bmn(i),\label{Eq:q_km}\\
\bmq(i) &= \col{\bmq_{k}(i)}_{k=1}^N.\label{Eq:q_i}
\end{align}
We also define an $N\times N$ block matrix
\begin{align*}
\Gamma(i)= \E[\bmq(i)\bmq\T(i)],
\end{align*}
whose $M_k\times M_\ell$ $(k,\ell)$-th block entry equals
\begin{align}\label{Eq:Gamma_i_km_ln}
[\Gamma(i)]_{k,\ell}&=\E[\bmq_{k}(i)\bmq_{\ell}\T(i)]\nn
&\!\stackrel{\fr{Eq:q_km}}=\E[[\calP(i)]_{k^-}\bmn(i)\bmn\T(i)[\calP(i)]_{\ell^-}\T]\nn
&\!\stackrel{\fr{Eq:R_ne}}=[\calP(i)]_{k^-}R_{n}(i)[\calP(i)]_{\ell^-}\T.
\end{align}
We have the following recursion for the network error vector $\tbmw(i)$.

\begin{Lemma}\label{lem:error_dynamics} 
Consider the distributed strategy \cref{adaptation,projection,Eq:psi'_l}. The evolution of the error dynamics across the network relative to the reference vector $\bmw^o$ defined by (\ref{Eq:weo}) is described by the following recursion:
\begin{dmath}[label={Eq:Netrror_dynamics}]
\tbmw(i) = \calP(i)(I_M-\calM\bm\calR_{u}(i))\tbmw(i-1) -\calP(i)\calM\bmg(i)-\bmq(i),
\end{dmath}
\vspace{-0.2cm}
\begin{dmath}[label={Eq:psierror_dynamics}]
\tbpsi(i+1) =(I_{M}-\calM\bm\calR_{u}(i+1))\calP(i)\tbpsi(i) - (I_{M}-\calM\bm\calR_{u}(i+1))\bmq(i) -\calM\bmg(i+1)
\end{dmath}
for any time instant $i\geq0$.
\end{Lemma}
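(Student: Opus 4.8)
The plan is to derive both recursions directly by substituting the definitions through the three algorithmic steps, tracking how the true reference $\bmw^o$ enters each expression. Let me sketch the first recursion \cref{Eq:Netrror_dynamics}; the second follows by an analogous bookkeeping shifted in time.

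First I would write the per-agent projection output \cref{projection} in stacked network form. Because the block matrix $\calP(i)$ in \cref{Eq:Pe} collects exactly the blocks $\brk{\calP_{\calN_k}(i)}_{k,\ell}$ that act on the neighboring estimates, and $f(i)$ in \cref{Eq:f(i)} collects the $\brk{f_{\calN_k}(i)}_k$, the network version of \cref{projection} is $\bmw(i)=\calP(i)\bpsi'(i)-f(i)$, where $\bpsi'(i)=\col{\bpsi'_k(i)}_{k=1}^N$ stacks the exchanged noisy estimates. The next move is to split $\bpsi'(i)$ according to \cref{Eq:psi'_l}: the diagonal (self) block uses the clean $\bpsi_k(i)$ while the off-diagonal blocks carry the noise $\bmn_\ell(i)$. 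This is precisely why $[\calP(i)]_{k^-}$ (the $k$-th block row with its diagonal zeroed out) appears: the noise contribution to agent $k$ is $[\calP(i)]_{k^-}\bmn(i)=\bmq_k(i)$ from \cref{Eq:q_km}. Hence $\calP(i)\bpsi'(i)=\calP(i)\bpsi(i)+\bmq(i)$, giving $\bmw(i)=\calP(i)\bpsi(i)+\bmq(i)-f(i)$.

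The key algebraic step is to verify that $\calP(i)$ annihilates the reference in the sense that $\calP(i)\bmw^o=\bmw^o-$(the constant that cancels $f(i)$). Concretely, since $\bmw^o$ satisfies every local constraint \cref{Eq:local_const}, applying the projection \cref{Eq:w_k_e_i} with $\psi'_{\calN_k}=w^o_{\calN_k}$ returns $w^o_{\calN_k}$ itself; in stacked form this reads $\calP(i)\bmw^o-f(i)=\bmw^o$. I would subtract the adaptation-and-projection expression for $\bmw(i)$ from this identity. Writing $\tbmw(i)=\bmw^o-\bmw(i)$ and $\tbpsi(i)=\bmw^o-\bpsi(i)$, the $f(i)$ terms cancel and I obtain $\tbmw(i)=\calP(i)\tbpsi(i)-\bmq(i)$. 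Finally I substitute the adaptation step \cref{adaptation} in network form, $\bpsi(i)=\bmw(i-1)+\calM\bm\calR_u(i)\tbmw(i-1)+\calM\bmg(i)$, using \cref{model} to produce $\bmd_k(i)-\bmu_k\T(i)\bmw_k(i-1)=\bmu_k\T(i)\tbmw_k(i-1)+\bmv_k(i)$, so that $\tbpsi(i)=(I_M-\calM\bm\calR_u(i))\tbmw(i-1)-\calM\bmg(i)$. Substituting yields \cref{Eq:Netrror_dynamics}.

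For \cref{Eq:psierror_dynamics} I would compose the same building blocks in the order project-then-adapt: start from $\tbmw(i)=\calP(i)\tbpsi(i)-\bmq(i)$, then feed this through the adaptation at time $i+1$ to get $\tbpsi(i+1)=(I_M-\calM\bm\calR_u(i+1))\tbmw(i)-\calM\bmg(i+1)$, and substitute. The main obstacle I anticipate is purely bookkeeping rather than conceptual: correctly justifying the stacked identity $\calP(i)\bmw^o-f(i)=\bmw^o$ (i.e., that the reference is a fixed point of the block projection even though each $\calP_{\calN_k}(i)$ acts only on a neighborhood), and keeping the zeroed-diagonal row operator $[\calP(i)]_{k^-}$ consistent with the self-loop convention $\bpsi'_k(i)=\bpsi_k(i)$ in \cref{Eq:psi'_l}. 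Once that fixed-point property and the noise decomposition $\calP(i)\bpsi'(i)=\calP(i)\bpsi(i)+\bmq(i)$ are pinned down, the rest is routine substitution of \cref{model,adaptation}.
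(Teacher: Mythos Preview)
Your proposal is correct and follows essentially the same route as the paper: derive the stacked projection relation $\bmw(i)=\calP(i)\bpsi(i)+\bmq(i)-f(i)$, invoke the fixed-point identity $\bmw^o=\calP(i)\bmw^o-f(i)$ from the local constraints, subtract to obtain $\tbmw(i)=\calP(i)\tbpsi(i)-\bmq(i)$, and combine with the adaptation error $\tbpsi(i)=(I_M-\calM\bm\calR_u(i))\tbmw(i-1)-\calM\bmg(i)$ in both orders. The two bookkeeping points you flag (the noise decomposition via $[\calP(i)]_{k^-}$ and the fixed-point property) are exactly the ingredients the paper uses, so there is nothing further to add.
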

\begin{proof}
From \cref{adaptation}, we have
\begin{align}\label{Eq:adaptation_net}
\tbpsi(i) = \parens*{I_M-\calM\bm\calR_{u}(i)}\tbmw(i-1) -\calM\bmg(i).
\end{align}
From \cref{projection,Eq:psi'_l}, we obtain
\begin{align}\label{Eq:projection_net}
 \bmw(i) =  \calP(i) \bpsi(i)+\bmq(i)-f(i).   
\end{align}
Since the parameter vectors $\braces*{\bmw_\ell^o}_{\ell\in\calN_k}$ satisfy the local constraints \cref{Eq:local_const} at agent $k$, we have  
\begin{align}
\bmw^o &=  \calP(i) \bmw^o-f(i).\label{Eq:projection_unknown}
\end{align}
Subtracting \cref{Eq:projection_net} from both sides of \cref{Eq:projection_unknown}, we have
\begin{align}\label{Eq:projection_net_error}
 \tbmw(i) =  \calP(i) \tbpsi(i)-\bmq(i).   
\end{align}
Substituting \cref{Eq:adaptation_net} into the \gls{RHS} of \cref{Eq:projection_net_error}, we arrive at the desired recursion (\ref{Eq:Netrror_dynamics}). Substituting \cref{Eq:projection_net_error} into the \gls{RHS} of \cref{Eq:adaptation_net}, we obtain the desired recursion (\ref{Eq:psierror_dynamics}), and the proof is complete.
\end{proof}

Let $\Sigma$ be a symmetric positive semi-definite matrix, and
\begin{align}\label{Eq:Sigma'}
\Sigma'(i)=\E[(I_{M}-\bm\calR_{u}(i)\calM)\calP\T(i)\Sigma\calP(i)(I_{M}-\calM\bm\calR_{u}(i))].
\end{align}
From (\ref{Eq:Netrror_dynamics}), we have
\begin{align}\label{Eq:weighted_norm}
\E\norm{\tbmw(i)}^2_\Sigma&
\stackrel{\ft{(a)}}=\E\norm{\tbmw(i-1)}^2_{\Sigma'(i)} +\E[\bmq\T(i)\Sigma\bmq(i)]\nonumber\\
&\hspace{0.4cm}+\E[\bmg\T(i)\calM\calP\T(i)\Sigma\calP(i)\calM\bmg(i)]\nonumber\\
&\stackrel{\ft{(b)}}=\E\norm{\tbmw(i-1)}^2_{\Sigma'(i)}
 +\E[\trace{\bmq(i)\bmq\T(i)\Sigma}]\nonumber\\
 &\hspace{0.4cm}+\E[\trace{\bmg(i)\bmg\T(i)\calM\calP\T(i)\Sigma\calP(i)\calM}]\nn
&\stackrel{\ft{(c)}}=\E\norm{\tbmw(i-1)}^2_{\Sigma'(i)} +\trace{\Gamma(i)\Sigma}\nn
&\hspace{0.4cm}+\trace{\calG\calM\calP\T(i)\Sigma\calP(i)\calM},
\end{align}
where in (a) we used \cref{Ass:measurement_noise,Ass:Regression_data,Ass:Independence_assumption,Ass:additive_noise}, and 
\begin{align}
    \E\bmg(i) &= 0_{M\times1},\label{Eq:Eg(i)}\\
    \E\bmq(i) &= 0_{M\times1},\label{Eq:Eq(i)}
\end{align}
in (b) we used the property $\trace{AB}=\trace{BA}$ for any matrices $\braces*{A,B}$ of compatible sizes, and in (c) we interchanged expectation and trace.

\section{Privacy Noise Design}\label{Sect: Privacy Mechanism Noise Design and Convergence Analysis}
In this section, we present an approximate solution to the utility-privacy optimization trade-off problem \cref{Eq:privacy_req} by first deriving a sufficient condition that leads to the privacy constraints. Finally, we motivate a distributed and adaptive scheme for each agent to compute its local privacy noise power at each time instant.

\subsection{Privacy Noise Power}
We start by showing that the quantity $\msd_\net(i)$ is a monotonically increasing function \gls{wrt} the privacy noise powers $\left\{\sigma_{n,k}^2(i)\right\}$. 
\begin{Lemma}\label{Thm: MSD vs privacy noise powers}
Consider the distributed strategy \cref{adaptation,projection,Eq:psi'_l}. Suppose that \cref{Ass:measurement_noise,Ass:Regression_data,Ass:Independence_assumption,Ass:additive_noise} hold. Then, $\msd_\net(i)$ is an increasing function \gls{wrt} the privacy noise powers $\left\{\sigma_{n,k}^2(i)\right\}$.
\end{Lemma}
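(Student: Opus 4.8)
The plan is to evaluate the mean-square recursion \cref{Eq:weighted_norm} at $\Sigma = I_M$, since by definition $\msd_\net(i) = \tfrac1N\E\norm{\tbmw(i)}^2$. With this choice, \cref{Eq:weighted_norm} writes $\E\norm{\tbmw(i)}^2$ as a sum of three terms, namely $\E\norm{\tbmw(i-1)}^2_{\Sigma'(i)}$, $\trace{\Gamma(i)}$, and $\trace{\calG\calM\calP\T(i)\calP(i)\calM}$. The whole argument then reduces to identifying how each of these three terms depends on the current-time privacy noise powers $\{\sigma_{n,k}^2(i)\}$.

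The key structural observation, which I would establish first, is that \emph{only} the middle term $\trace{\Gamma(i)}$ depends on $\{\sigma_{n,k}^2(i)\}$. By \cref{Ass:additive_noise} the privacy noise is white over time, so the previous-step error $\tbmw(i-1)$ — which is built only from data and noises up to time $i-1$ — is independent of $\bmn(i)$ and therefore carries no dependence on the time-$i$ noise powers; since $\Sigma'(i)$ in \cref{Eq:Sigma'} involves only $\calP(i)$ and the regressor statistics, the term $\E\norm{\tbmw(i-1)}^2_{\Sigma'(i)}$ is unaffected by $\{\sigma_{n,k}^2(i)\}$. The third term depends only on $\calG$, $\calM$, and the (fixed) projection $\calP(i)$ through \cref{Eq:calG,Eq:calM,Eq:Pe}, none of which involve the privacy noise powers. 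Thus all dependence on $\{\sigma_{n,k}^2(i)\}$ is concentrated in $\trace{\Gamma(i)}$.

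Next I would make the dependence of $\trace{\Gamma(i)}$ on the noise powers explicit. Using \cref{Eq:Gamma_i_km_ln} and summing over the block diagonal, $\trace{\Gamma(i)} = \sum_{k=1}^N \trace{[\calP(i)]_{k^-}R_n(i)[\calP(i)]_{k^-}\T}$, and substituting the block-diagonal form $R_n(i) = \diag\{\sigma_{n,\ell}^2(i)I_{M_\ell}\}_{\ell=1}^N$ from \cref{Eq:R_nk_i,Eq:R_ne} yields
\begin{align*}
\trace{\Gamma(i)} = \sum_{\ell=1}^N \sigma_{n,\ell}^2(i)\, c_\ell(i), \qquad c_\ell(i) = \sum_{k=1}^N \trace{[\calP(i)]_{k^-,\ell}[\calP(i)]_{k^-,\ell}\T},
\end{align*}
where $[\calP(i)]_{k^-,\ell}$ denotes the $(k,\ell)$-block of $[\calP(i)]_{k^-}$ (equal to $[\calP(i)]_{k,\ell}$ for $\ell\neq k$ and to $0$ for $\ell=k$). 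Each coefficient $c_\ell(i)$ is a sum of squared Frobenius norms and is therefore nonnegative, so $\trace{\Gamma(i)}$ is a nonnegative linear combination of the noise powers.

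Combining these steps, $\msd_\net(i)$ equals $\tfrac1N\trace{\Gamma(i)}$ plus quantities independent of $\{\sigma_{n,k}^2(i)\}$, hence it is a nonnegative-linear, and therefore monotonically increasing, function of each $\sigma_{n,\ell}^2(i)$, which proves the claim. I expect the main obstacle to be the independence argument of the second paragraph: one must carefully invoke whiteness-over-time of the privacy noise (together with \cref{Ass:measurement_noise,Ass:Regression_data,Ass:Independence_assumption}) to guarantee that $\tbmw(i-1)$ inherits no dependence on the time-$i$ noise powers, and one must treat the weighted projection $\calP(i)$ as a fixed design object when varying $\{\sigma_{n,k}^2(i)\}$; once this separation is justified, the monotonicity is immediate from the explicit nonnegative-linear form of $\trace{\Gamma(i)}$.
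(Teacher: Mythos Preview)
Your proposal is correct and follows essentially the same route as the paper: set $\Sigma$ equal to (a scalar multiple of) the identity in the weighted recursion \cref{Eq:weighted_norm}, observe that the first and third terms do not depend on the time-$i$ privacy noise powers, and then expand $\trace{\Gamma(i)}$ via \cref{Eq:Gamma_i_km_ln,Eq:R_nk_i} as a nonnegative-linear combination of the $\sigma_{n,\ell}^2(i)$. The only cosmetic difference is that the paper groups the resulting double sum by the receiving agent $k$ (writing $\trace{[\Gamma(i)]_{k,k}}=\sum_{\ell\in\calN_k\setminus\{k\}}\sigma_{n,\ell}^2(i)\,\trace{[\calP(i)]_{k,\ell}\T[\calP(i)]_{k,\ell}}$), whereas you regroup it by the source agent $\ell$ to exhibit the coefficients $c_\ell(i)$ directly.
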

\begin{proof}
From \cref{Eq:MSD_net}, we have
\begin{align}\label{Eq:transient_MSD}
    \msd_\net(i)=\E\norm{\tbmw(i)}^2_{I_{M}/N}.
\end{align}
By setting 
\begin{align}\label{Eq:Sigma}
\Sigma=\frac{1}{N}I_M    
\end{align}
on both sides of \cref{Eq:weighted_norm}, we observe that the error $\tbmw(i-1)$ on the \gls{RHS} of \cref{Eq:weighted_norm} only relies on the random variables $\{\bmd_k(j),\bmu_k(j), \bmn_k(j) \mid k=1,\ldots,N, 0\leq j \leq i-1\}$, thus its value is not affected by the variances $\{\sigma_{n,k}^2(i)\}$ at time instant $i$. The value of $\msd_\net(i)$ depends on the variances $\{\sigma_{n,k}^2(i)\}$ via the matrix $\Gamma(i)$ in \cref{Eq:Gamma_i_km_ln}. We have
\begin{align}\label{Eq:trace_Gamma_Sigma}
\trace{\Gamma(i)\Sigma} &\stackrel{\fr{Eq:Sigma}}=  \frac{1}{N}\trace{\Gamma(i)}\nn
&=\frac{1}{N}\sum_{k=1}^{N}\trace{\brk*{\Gamma(i)}_{k,k}}  
\end{align}
and
\begin{align}\label{Eq:trace_Gamma_kk}
    \trace{\brk*{\Gamma(i)}_{k,k}} &\stackrel{\fr{Eq:Gamma_i_km_ln}}= \trace{\sum_{\ell\in\calN_k\backslash\braces{k}}\brk*{\calP(i)}_{k,\ell}R_{n,\ell}\,\brk*{\calP(i)}_{k,\ell}\T}\nn
    &=\sum_{\ell\in\calN_k\backslash\braces{k}}\trace{\brk*{\calP(i)}_{k,\ell}R_{n,\ell}\,\brk*{\calP(i)}_{k,\ell}\T}\nn
    &=\sum_{\ell\in\calN_k\backslash\braces{k}}\trace{R_{n,\ell}\,\brk*{\calP(i)}_{k,\ell}\T\brk*{\calP(i)}_{k,\ell}}\nn
    &\stackrel{\fr{Eq:R_nk_i}}=\sum_{\ell\in\calN_k\backslash\braces{k}}\sigma_{n,\ell}^2(i)\trace{\brk*{\calP(i)}_{k,\ell}\T\brk*{\calP(i)}_{k,\ell}},
\end{align}
which completes the proof.
\end{proof}

In view of \cref{Eq:privacy_req}, we need to find the smallest privacy noise powers that satisfy the privacy constraints \cref{Eq:privacy_cost}.
For each agent $k=1,\ldots,N$, let
\begin{align}
U_{kk}(i) &=\E[(\bmw^o_k-\E\bmw^o_k)(\bpsi'_k(i)-\E\bpsi'_k(i))\T],\label{Eq:U_kk_i}\\
X_{kk}(i) &= \E[\left(\bpsi_k(i)-\E\bpsi_k(i)\right)\left(\bpsi_k(i)-\E\bpsi_k(i)\right)\T],\label{Eq:Xkk(i)}\\
X'_{kk}(i) &= \E[\left(\bpsi'_k(i)-\E\bpsi'_k(i)\right)\left(\bpsi'_k(i)-\E\bpsi'_k(i)\right)\T]\label{Eq:R_psi'_k_i},
\end{align}
and
\begin{align*}
\hbmw_{k|\bpsi'_{k}}(i) = U_{kk}(i)\left(X'_{kk}(i)\right)^{-1}\left(\bpsi'_k(i)-\E\bpsi'_k(i)\right)+\E\bmw_k^o
\end{align*}
be the linear least-mean-square estimator \cite[p.66]{Say:B08} of $\bmw_k^o$ at time instant $i$, given $\bpsi'_{k}(i)$ (note that $\hbmw_{k|\bpsi'_{k}}(i)$ and $\bpsi'_k(i)$ are linearly related in our ATP$(\delta)$ strategy).


From \cref{Ass:additive_noise}, we have
\begin{align}\label{relation}
X'_{kk}(i) = X_{kk}(i) + R_{n,k}(i)=X_{kk}(i) + \sigma_{n,k}^2(i)I_{M_k}
\end{align}
and using \cite[p.66]{Say:B08}, we obtain
\begin{align}\label{Eq:MSD_llmse}
&\hspace{-0.5cm}\E\norm{\bmw_k^o-{\hbmw}_{k|\psi'_k}(i)}^2\nn
&=\trace{W_{kk} -U_{kk}(i)\left(X'_{kk}(i)\right)^{-1}U\T_{kk}(i)}\nn
&=\trace{W_{kk}} -\trace{U_{kk}(i)\left(X'_{kk}(i)\right)^{-1}U\T_{kk}(i)},
\end{align}
where the matrix $W_{kk}$ is defined in \cref{Wkk}. Substituting \cref{Eq:MSD_llmse} into the \gls{LHS} of the privacy constraint (\ref{Eq:privacy_cost}), we have
\begin{multline}\label{rewrittenCost}
\trace{U_{kk}(i)\left(X_{kk}(i) + \sigma_{n,k}^2(i)I_{M_k}\right)^{-1}U_{kk}\T(i)}\\ \leq \trace{W_{kk}}-\delta_k.
\end{multline}
A numerical solution for the optimal noise powers $\{\sigma_{n,k}^2(i)\}$ can be obtained by seeking the smallest values that satisfy the inequality \cref{rewrittenCost}. However, in view of the fact that agents in the network work in a cooperative manner, the evaluation of $\braces*{U_{kk}(i),X_{kk}(i)}$ involves global statistics that are not available locally at each agent $k$. In this paper, we are interested in a \emph{distributed and adaptive} scheme to compute the noise power $\sigma_{n,k}^2(i)$ for each agent $k$ and time instant $i\geq 0$. To this end, we first derive a closed-form expression for a noise power $\hsigma_{n,k}^2(i)$ that satisfies the privacy constraint \cref{Eq:privacy_cost} for each agent $k$ and time instant $i\geq 0$.

\begin{Theorem}\label{Thm:Sufficient condition}
Consider the distributed strategy \cref{adaptation,projection,Eq:psi'_l}. Suppose that \cref{Ass:measurement_noise,Ass:additive_noise,Ass:Independence_assumption,Ass:Regression_data} hold. If
\begin{align}\label{Ineq:sufficientCondition}
\sigma_{n,k}^2(i)\geq \frac{\trace{ U_{kk}\T(i) U_{kk}(i) }}{\trace{W_{kk} }-\delta_k}
\end{align}
for each agent $k$ and time instant $i\geq 0$, then the privacy constraint (\ref{Eq:privacy_cost}) is satisfied. 
\end{Theorem}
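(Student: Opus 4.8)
The plan is to reduce the privacy constraint to the equivalent algebraic inequality already derived in the text, and then bound the relevant trace from above by a positive-semidefinite (PSD) ordering argument. Concretely, by the closed-form expression \cref{Eq:MSD_llmse} for the linear least-mean-square error, the constraint \cref{Eq:privacy_cost} is equivalent to \cref{rewrittenCost}, namely $\trace{U_{kk}(i)\left(X_{kk}(i)+\sigma_{n,k}^2(i)I_{M_k}\right)^{-1}U_{kk}\T(i)}\leq \trace{W_{kk}}-\delta_k$. Hence it suffices to show that the lower bound \cref{Ineq:sufficientCondition} on $\sigma_{n,k}^2(i)$ forces the left-hand trace below $\trace{W_{kk}}-\delta_k$.

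The key step is the observation that $X_{kk}(i)$ in \cref{Eq:Xkk(i)} is a covariance matrix and therefore PSD, so by \cref{relation} we have $X'_{kk}(i)=X_{kk}(i)+\sigma_{n,k}^2(i)I_{M_k}\succeq \sigma_{n,k}^2(i)I_{M_k}$, and inverting reverses the order to give $\left(X'_{kk}(i)\right)^{-1}\preceq \sigma_{n,k}^{-2}(i)I_{M_k}$. Applying the congruence $A\mapsto U_{kk}(i)AU_{kk}\T(i)$, which preserves the PSD order, and then using that the trace is monotone on the PSD cone, I obtain
\[
\trace{U_{kk}(i)\left(X'_{kk}(i)\right)^{-1}U_{kk}\T(i)} \leq \frac{1}{\sigma_{n,k}^2(i)}\trace{U_{kk}(i)U_{kk}\T(i)} = \frac{\trace{U_{kk}\T(i)U_{kk}(i)}}{\sigma_{n,k}^2(i)},
\]
where the final equality is the cyclic property of the trace. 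Substituting the hypothesis \cref{Ineq:sufficientCondition} into this bound makes the right-hand side at most $\trace{W_{kk}}-\delta_k$, which is exactly \cref{rewrittenCost}, completing the argument.

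There is no deep obstacle here; the entire content is the operator-monotonicity chain $X_{kk}(i)\succeq 0 \Rightarrow \left(X'_{kk}(i)\right)^{-1}\preceq \sigma_{n,k}^{-2}(i)I_{M_k}\Rightarrow$ the trace bound. The two points requiring care are: (i) ensuring $\sigma_{n,k}^2(i)>0$ so that $X'_{kk}(i)$ is invertible and the division is legitimate; and (ii) noting that the feasibility range \cref{Ineq:delta_k_range}, together with $\delta_k<\trace{W_{kk}}$ for a finite noise power, guarantees $\trace{W_{kk}}-\delta_k>0$, so that the threshold in \cref{Ineq:sufficientCondition} is well defined. Both are immediate. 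It is also worth remarking that the bound is tight precisely when $X_{kk}(i)=0$, so the condition is generally conservative; this is consistent with it being a \emph{sufficient} rather than a necessary condition, which is all that is claimed.
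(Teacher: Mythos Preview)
Your proof is correct and is essentially the same argument as the paper's: both bound $\trace{U_{kk}(i)(X_{kk}(i)+\sigma_{n,k}^2(i)I)^{-1}U_{kk}\T(i)}$ by $\sigma_{n,k}^{-2}(i)\trace{U_{kk}\T(i)U_{kk}(i)}$ using that $X_{kk}(i)\succeq 0$. The only cosmetic difference is that the paper carries out the step via an explicit spectral decomposition of $X_{kk}(i)$ and a coordinate-wise inequality $\widetilde{u}_{kk}^m(i)/(\lambda_k^m(i)+\sigma_{n,k}^2(i))\leq \widetilde{u}_{kk}^m(i)/\sigma_{n,k}^2(i)$, whereas you invoke the equivalent operator-monotonicity fact $(X_{kk}(i)+\sigma_{n,k}^2(i)I)^{-1}\preceq \sigma_{n,k}^{-2}(i)I$ directly.
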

\begin{proof}
The proof involves mainly algebraic manipulations and is provided in \cref{Sup:sufficient condition} of the supplementary material.
\end{proof} 

In view of \cref{Thm: MSD vs privacy noise powers}, to reduce $\msd_\net(i)$ as much as possible, we set the noise power of agent $k$ at time instant $i\geq 0$ to be
\begin{align}\label{Eq:sigma_nk_2}
\widehat{\sigma}_{n,k}^2(i)= \frac{\trace{ U_{kk}\T(i) U_{kk}(i) }}{\trace{W_{kk} }-\delta_k}.
\end{align}
We note that this is a feasible solution to \cref{Eq:privacy_req} due to \cref{Thm:Sufficient condition} and moreover $\widehat{\sigma}_{n,k}^2(i)>0$ because $\trace{U_{kk}\T(i)U_{kk}(i)}> 0$ and \cref{Ineq:delta_k_range}. We next show that the sequence $\braces*{\widehat{\sigma}_{n,k}^2(i)}$ defined by \cref{Eq:sigma_nk_2} is convergent. To facilitate our analysis, we make the following assumption.
\begin{Assumption}\label{Ass:Collumn_rank_def}
The matrix $\brk*{\calD_{k_o}}_{\cdot,k_o}$ is column-rank deficient for at least one agent $k_o$.
\end{Assumption}
This assumption definitely holds true for sufficiently large enough $M_{k_o}$ such that $\rank\parens*{\calD_{k_o}}<M_{k_o}$ holds.

\begin{Lemma}
Consider the matrix $\calP(i)$ defined by \cref{Eq:Pe}. Suppose that \cref{Ass:Collumn_rank_def} holds. Suppose also that the matrices $\braces*{\Omega_{\calN_k}(i)}$ are positive-definite for all agents $k$ and time instants $i\geq0$. For any time instant $i\geq0$, we have
\begin{align}\label{Ineq:norm_CalP(i)}
    \norm{\calP(i)}\geq 1.
\end{align}
\end{Lemma}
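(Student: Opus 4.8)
The plan is to bound the operator norm from below by exhibiting a single nonzero vector that $\calP(i)$ does not shorten, i.e.\ a vector $x$ with $\norm{\calP(i)x}\geq\norm{x}$; this immediately gives $\norm{\calP(i)}=\sup_{x\neq0}\norm{\calP(i)x}/\norm{x}\geq1$. The natural candidate is a network vector supported entirely on the block of the distinguished agent $k_o$ supplied by \cref{Ass:Collumn_rank_def}. First I would extract from that assumption a nonzero $z\in\Real^{M_{k_o}}$ lying in the kernel of the block column $[\calD_{k_o}]_{\cdot,k_o}$, which exists precisely because this block is column-rank deficient.

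Next I would embed $z$ into the neighborhood space: let $e$ be the $M_{\calN_{k_o}}\times1$ block vector equal to $z$ on the $k_o$-sub-block and zero on every other $\ell\in\calN_{k_o}$. Because $e$ is supported only on the $k_o$-coordinates, its image under $\calD_{k_o}$ collapses to $\calD_{k_o}e=[\calD_{k_o}]_{\cdot,k_o}z=0$, so $e\in\ker\calD_{k_o}$. The crucial algebraic step is then to show that the diagonal block $[\calP(i)]_{k_o,k_o}=[\calP_{\calN_{k_o}}(i)]_{k_o,k_o}$ (cf.\ \cref{Eq:Pe}, using $k_o\in\calN_{k_o}$) fixes $z$. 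Since $\Omega_{\calN_{k_o}}(i)$ is positive definite and $\calD_{k_o}$ has full row rank, $\calD_{k_o}\Omega_{\calN_{k_o}}(i)\calD_{k_o}\T$ is invertible and $\calP_{\calN_{k_o}}(i)$ in \cref{calP} is well-defined; as $\calD_{k_o}e=0$, the second term in \cref{calP} annihilates $e$, so $\calP_{\calN_{k_o}}(i)e=e$. Reading off the $k_o$-sub-block of this identity yields $[\calP_{\calN_{k_o}}(i)]_{k_o,k_o}z=z$.

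Finally I would form the full network vector $x\in\Real^M$ equal to $z$ on block $k_o$ and zero on all other blocks. Because only the $k_o$-th block column of $\calP(i)$ acts on $x$, the $k_o$-th block of $\calP(i)x$ equals $[\calP(i)]_{k_o,k_o}z=z$, whence $\norm{\calP(i)x}^2=\sum_k\norm{[\calP(i)x]_k}^2\geq\norm{z}^2=\norm{x}^2$. Dividing through delivers \cref{Ineq:norm_CalP(i)}.

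I expect the main obstacle to be conceptual rather than computational: $\calP(i)$ is assembled row by row from distinct neighborhood projections $\{\calP_{\calN_k}(i)\}$ (cf.\ \cref{Eq:Pe}) and is itself neither idempotent nor symmetric, so the textbook fact that a nontrivial projection has norm at least one cannot be invoked for $\calP(i)$ directly. The insight that unlocks the argument is that the eigenvalue-one (fixed-point) property of the genuine projection $\calP_{\calN_{k_o}}(i)$ transfers to the single diagonal block $[\calP(i)]_{k_o,k_o}$, and one diagonal block already suffices to lower-bound the norm through a block-supported test vector. Verifying that the embedded $e$ truly lies in $\ker\calD_{k_o}$—so that the off-diagonal coupling inside $\calD_{k_o}$ does not interfere—is the single place where \cref{Ass:Collumn_rank_def} is indispensable.
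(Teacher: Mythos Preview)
Your proposal is correct and follows essentially the same approach as the paper: both construct a block-supported test vector on the distinguished agent $k_o$ and show that the diagonal block $[\calP(i)]_{k_o,k_o}$ fixes some nonzero vector. The only cosmetic difference is that you directly exhibit the eigenvector $z\in\ker[\calD_{k_o}]_{\cdot,k_o}$ and verify $[\calP_{\calN_{k_o}}(i)]_{k_o,k_o}z=z$ via the embedding $e$, whereas the paper writes out $[\calP(i)]_{k_o,k_o}=I_{M_{k_o}}-\omega_{k_o k_o}(i)^{-1}[\calD_{k_o}]_{\cdot,k_o}\T(\calD_{k_o}\Omega_{\calN_{k_o}}(i)\calD_{k_o}\T)^{-1}[\calD_{k_o}]_{\cdot,k_o}$, observes this symmetric matrix has eigenvalue $1$ because the subtracted PSD term is singular under \cref{Ass:Collumn_rank_def}, and then invokes $\norm{[\calP(i)]_{k_o,k_o}}\ge1$.
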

\begin{proof}
Let $\psi$ be an $N\times1$ block vector, with each block entry of size $M_k\times1$. We choose $\psi=\brk*{0\T,\ldots,0\T,\psi_{k_o}\T,0\T,\ldots,0\T}\T$ with a non-zero vector at the $k_o$-th block for some $k_o=1,\ldots,N$. We have
\begin{align}\label{Ineq:norm_P_psi}
    \norm{\calP(i)\psi}^2&=\sum_{k=1}^N\norm{\brk*{\calP(i)}_k\psi}^2\nn
    &\geq\norm{\brk*{\calP(i)}_{k_o}\psi}^2\nn
    &=\norm{\brk*{\calP(i)}_{k_o,k_o}\psi_{k_o}}^2.
\end{align}
From \cref{calP,Eq:Pe}, we obtain
\begin{align*}
    &\hspace{-0.1cm}\brk*{\calP(i)}_{k_o,k_o}\nn
    &=\brk*{\calP_{\calN_{k_o}}(i)}_{k_o,k_o}\nn
    &= \brk*{I_{M_{\calN_{k_o}}} - \Omega_{\calN_{k_o}}(i)\calD_{k_o}\T(\calD_{k_o} \Omega_{\calN_{k_o}}(i)\calD_{k_o}\T)^{-1}\calD_{k_o}}_{k_o,k_o}\nn
    &=I_{M_{k_o}}-\brk*{\Omega_{\calN_{k_o}}(i)\calD_{k_o}\T}_{k_o}(\calD_{k_o} \Omega_{\calN_{k_o}}(i)\calD_{k_o}\T)^{-1}\brk*{\calD_{k_o}}_{\cdot,k_o}\nn
    &\stackrel{\fr{Eq:weight_matrix}}=I_{M_{k_o}}-\frac{1}{\omega_{k_o,k_o}(i)}\brk*{\calD_{k_o}\T}_{k_o}(\calD_{k_o} \Omega_{\calN_{k_o}}(i)\calD_{k_o}\T)^{-1}\brk*{\calD_{k_o}}_{\cdot,k_o}.
\end{align*}
Note that the matrix $\brk*{\calD_{k_o}\T}_{k_o}(\calD_{k_o} \Omega_{\calN_{k_o}}(i)\calD_{k_o}\T)^{-1}\brk*{\calD_{k_o}}_{\cdot,k_o}$ is positive semi-definite, and it has a zero eigenvalue given \cref{Ass:Collumn_rank_def}. Therefore $\brk*{\calP(i)}_{k_o,k_o}$ has an eigenvalue 1. In view of the fact that the $\brk*{\calP(i)}_{k_o,k_o}$ is a symmetric matrix, we conclude that 
\begin{align*}
    \norm{\brk*{\calP(i)}_{k_o,k_o}}\geq1.
\end{align*}
Thus there exists $\psi_{k_o}$ such that
\begin{align}\label{Ineq:norm_P_kokoPsi_ko}
    \norm{\brk*{\calP(i)}_{k_o,k_o}\psi_{k_o}}\geq\norm{\psi_{k_o}}=\norm{\psi}.
\end{align}
It then follows from \cref{Ineq:norm_P_kokoPsi_ko,Ineq:norm_P_psi} that
\begin{align*}
    \norm{\psi}\leq\norm{\calP(i)\psi}\leq\norm{\calP(i)}\norm{\psi},
\end{align*}
and the proof is complete.
\end{proof}

\begin{Proposition}\label{Lem:steady-state variance} 
Consider the distributed strategy \cref{adaptation,projection,Eq:psi'_l}. Suppose that \cref{Ass:Collumn_rank_def,Ass:measurement_noise,Ass:additive_noise,Ass:Independence_assumption,Ass:Regression_data} hold. Suppose also that 
\begin{align*}
\frac{1-1/\norm{\calP(i)}}{1+1/\norm{\calP(i)}}<\frac{\lambda_{\min}(R_{u,k})}{\lambda_{\max}(R_{u,k})}\leq 1
\end{align*}
for all agents $k$. If the step-size $\mu_k$ satisfies
\begin{align}\label{Cond:variance convergence}
\frac{1-1/\norm{\calP(i)}}{\lambda_{\min}(R_{u,k})}<\mu_k<\frac{1+1/\norm{\calP(i)}}{\lambda_{\max}(R_{u,k})}
\end{align}
for each agent $k$, then
\begin{align}
\label{Eq:lim_sigma_n_k_i}
\lim_{i\to\infty}\widehat{\sigma}_{n,k}^2(i) =\frac{\trace{ W_{kk}^2  }}{\trace{W_{kk} }-\delta_k},
\end{align}
where $\widehat{\sigma}_{n,k}^2(i)$ is defined by \cref{Eq:sigma_nk_2}.
\end{Proposition}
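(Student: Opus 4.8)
The plan is to reduce everything to the behaviour of the cross-covariance between the true parameter and the network error, and to show it decays to zero. Since the denominator $\trace{W_{kk}} - \delta_k$ in \cref{Eq:sigma_nk_2} is a fixed positive constant (cf.\ \cref{Ineq:delta_k_range}), and since $W_{kk}$ is symmetric so that $\trace{W_{kk}\T W_{kk}} = \trace{W_{kk}^2}$, it suffices by continuity of the trace to prove that $U_{kk}(i) \to W_{kk}$ as $i \to \infty$.

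First I would simplify $U_{kk}(i)$. Because $\bmn_k(i)$ is zero-mean and independent of $\bmw_k^o$ (\cref{Ass:additive_noise,Ass:Independence_assumption}), we have $\E\bpsi'_k(i) = \E\bpsi_k(i)$ and the additive-noise term in \cref{Eq:U_kk_i} vanishes, so $U_{kk}(i) = \E[(\bmw_k^o - \E\bmw_k^o)(\bpsi_k(i) - \E\bpsi_k(i))\T]$. Substituting $\bpsi_k(i) = \bmw_k^o - \tbpsi_k(i)$ and extracting the $(k,k)$ block gives $U_{kk}(i) = W_{kk} - [C(i)]_{k,k}$, where $C(i) \triangleq \E[(\bmw^o - \E\bmw^o)\tbpsi\T(i)]$. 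The claim therefore reduces to showing $C(i) \to 0_{M\times M}$.

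Next I would obtain a recursion for $C(i)$ directly from the error dynamics \cref{Eq:psierror_dynamics}. Left-multiplying the transpose of that recursion by $(\bmw^o - \E\bmw^o)$ and taking expectations, the two driving terms drop out: the privacy-noise contribution through $\bmq(i)$ and the gradient-noise contribution through $\bmg(i+1)$ are each zero-mean and independent of $\bmw^o$, while $\bm\calR_u(i+1)$ is independent of both $\bmw^o$ and $\tbpsi(i)$ with $\E\bm\calR_u(i+1) = \calR_u$. This yields the linear recursion $C(i+1) = C(i)\,\calP\T(i)(I_M - \calR_u\calM)$, hence $C(i) = C(0)\prod_{j=0}^{i-1}B(j)$ with $B(j) \triangleq \calP\T(j)(I_M - \calR_u\calM)$ and $C(0)$ finite. (Equivalently, $B(j)\T = (I_M - \calM\calR_u)\calP(j)$ is precisely the transition matrix governing the mean error $\E\tbpsi(j)$, so $C(i)$ inherits the mean-stability behaviour of the algorithm.)

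The crux --- and the step I expect to be the main obstacle --- is to show this time-varying matrix product annihilates $C(0)$. By submultiplicativity, $\norm{B(i)} \leq \norm{\calP(i)}\,\norm{I_M - \calR_u\calM}$, and since $\calR_u\calM = \diag\{\mu_k R_{u,k}\}$ is symmetric positive definite we have $\norm{I_M - \calR_u\calM} = \max_k \max\{|1 - \mu_k\lambda_{\min}(R_{u,k})|,\, |1 - \mu_k\lambda_{\max}(R_{u,k})|\}$. The window \cref{Cond:variance convergence} is exactly what forces $|1 - \mu_k\lambda| < 1/\norm{\calP(i)}$ for every eigenvalue $\lambda$ of $R_{u,k}$ (the side condition $\frac{1 - 1/\norm{\calP(i)}}{1 + 1/\norm{\calP(i)}} < \lambda_{\min}(R_{u,k})/\lambda_{\max}(R_{u,k})$ merely guarantees this window is nonempty), whence $\norm{I_M - \calR_u\calM} < 1/\norm{\calP(i)}$ and $\norm{B(i)} < 1$ for every $i$. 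Pointwise contraction alone does not force the infinite product to vanish; I would close this gap using the standing assumption that the weights $\{\omega_{\ell k}(i)\}$, and hence $\calP(i)$, converge. This upgrades the bound to $\norm{B(i)} \to \norm{\calP_\infty\T(I_M - \calR_u\calM)} < 1$, so there exist $\gamma < 1$ and $J$ with $\norm{B(i)} \leq \gamma$ for all $i \geq J$; then $\norm{C(i)} \leq \norm{C(0)}\bigl(\prod_{j=0}^{J-1}\norm{B(j)}\bigr)\gamma^{\,i-J} \to 0$. This gives $C(i) \to 0$, hence $U_{kk}(i) \to W_{kk}$ and $\trace{U_{kk}\T(i)U_{kk}(i)} \to \trace{W_{kk}^2}$, and dividing by $\trace{W_{kk}} - \delta_k$ yields \cref{Eq:lim_sigma_n_k_i}.
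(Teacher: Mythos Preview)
Your argument is correct and mirrors the paper's proof: the paper packages your $C(i)$ as $\calV(i)$, derives the identical recursion $\calV(i+1)=\calV(i)\calB\T(i)$ with $\calB(i)=(I_M-\calM\calR_u)\calP(i)$, and bounds $\norm{\calB(i)}\le\norm{\calP(i)}\,\norm{I_M-\calM\calR_u}$ to obtain the same step-size window before concluding $U_{kk}(i)\to W_{kk}$. Your explicit treatment of the time-varying product via the convergence of $\calP(i)$ is in fact more careful than the paper, which simply asserts the conclusion once $\rho(\calB(i))<1$ is shown for each $i$.
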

\begin{proof}
See \cref{Sup:proof of steady-state variance} of the supplementary material.
\end{proof}

We next motivate a scheme based on \cref{Eq:lim_sigma_n_k_i} for agents in the network to compute the privacy noise powers in a distributed and adaptive manner.

\subsection{Distributed and Adaptive \texorpdfstring{ATP($\delta$)}{ATP}}
The covariance matrix $W_{kk}$ may not be known beforehand, and may change over time. In this subsection, we motivate a distributed and adaptive scheme that enables each agent $k$ to estimate $\widehat{\sigma}_{n,k}^2(\infty)$ defined by \cref{Eq:lim_sigma_n_k_i} locally by using data available to it at time instant $i$. We assume that the statistics $\E\bmw_k^o$ is available at each agent $k$ \emph{a priori}; otherwise it can be estimated via 
\begin{align*}
    \E\bmw_k^o=R_{u,k}^{-1}\,\E[\bmd_k(i)\bmu_k(i)]
\end{align*}
which follows from \cref{model,Ass:Independence_assumption,Ass:Regression_data}, and where the quantities $\braces*{R_{u,k},\E[\bmd_k(i)\bmu_k(i)]}$ can be inferred from the random data $\braces*{\bmd_k(i),\bmu_k(i)}$. Let $\bbeta_k(i)$ be an estimate of $\trace{ W_{kk}^2 }$, and $\bgamma_k(i)$ an estimate of $\trace{W_{kk} }$ respectively at time instant $i$. We start with $\bbeta_{k}(-1)= 0$ and $\bgamma_k(-1)=\delta_k$ for every agent $k$. For any time instant $i\geq 0$, we compute the quantities $\braces*{\bbeta_k(i),\bgamma_k(i)}$ at each agent $k$ using
\begin{align}
\bmR_{\psi,k}(i) &= \parens*{\bpsi_k(i)-\E\bmw_k^o}\parens*{\bpsi_k(i)-\E\bmw_k^o}\T,\nn
\bbeta_k(i) &=\alpha\bbeta_k(i-1)+\parens*{1-\alpha}\trace{\bmR_{\psi,k}^2(i)},\nn
\bgamma_k(i) &=\alpha\bgamma_k(i-1)+\parens*{1-\alpha}\trace{\bmR_{\psi,k}(i)},\nonumber
\end{align}
where $\bpsi_k(i)$ is the intermediate estimate generated by \cref{adaptation}, and the parameter $0<\alpha<1$ is a forgetting factor. Then, agent $k$ estimates $\widehat{\sigma}_{n,k}^2(\infty)$ using
\begin{align}\label{Eq:adaptive_privacy_noise_power}
 \widehat{\widehat{\bsigma}}^2_{n,k}(i) =\left\{\begin{array}{lr}
    \alpha\widehat{\widehat{\bsigma}}^2_{n,k}(i-1)+\parens*{1-\alpha}\frac{\bbeta_k(i)}{\bgamma_k(i)-\delta_k},  \\ &\hspace{-2.5cm}\textrm{if ${\bbeta_k(i)}/\parens*{\bgamma_k(i)-\delta_k}>0$},  \\
    \widehat{\widehat{\bsigma}}^2_{n,k}(i-1),  & \textrm{otherwise},
 \end{array} \right.  
\end{align}
where $\widehat{\widehat{\bsigma}}^2_{n,k}(-1)=0$.
A privacy noise $\bmn_{k}(i)$ is generated at agent $k$ by following a distribution with zero mean and variance $\widehat{\widehat{\bsigma}}^2_{n,k}(i)$. This noise is added to the intermediate estimate $\bpsi_k(i)$ to form the noisy estimate $\bpsi'_k(i)$. The quantity $\bpsi'_k(i)$ is then transmitted to the neighboring agents $\braces*{\ell\in\calN_k\backslash\braces*{k}}$. \cref{Fig:Algorithm} summarizes our proposed distributed and adaptive ATP($\delta$) algorithm.
\begin{figure}[!htb]
\centering
\includegraphics[width=3.45in]{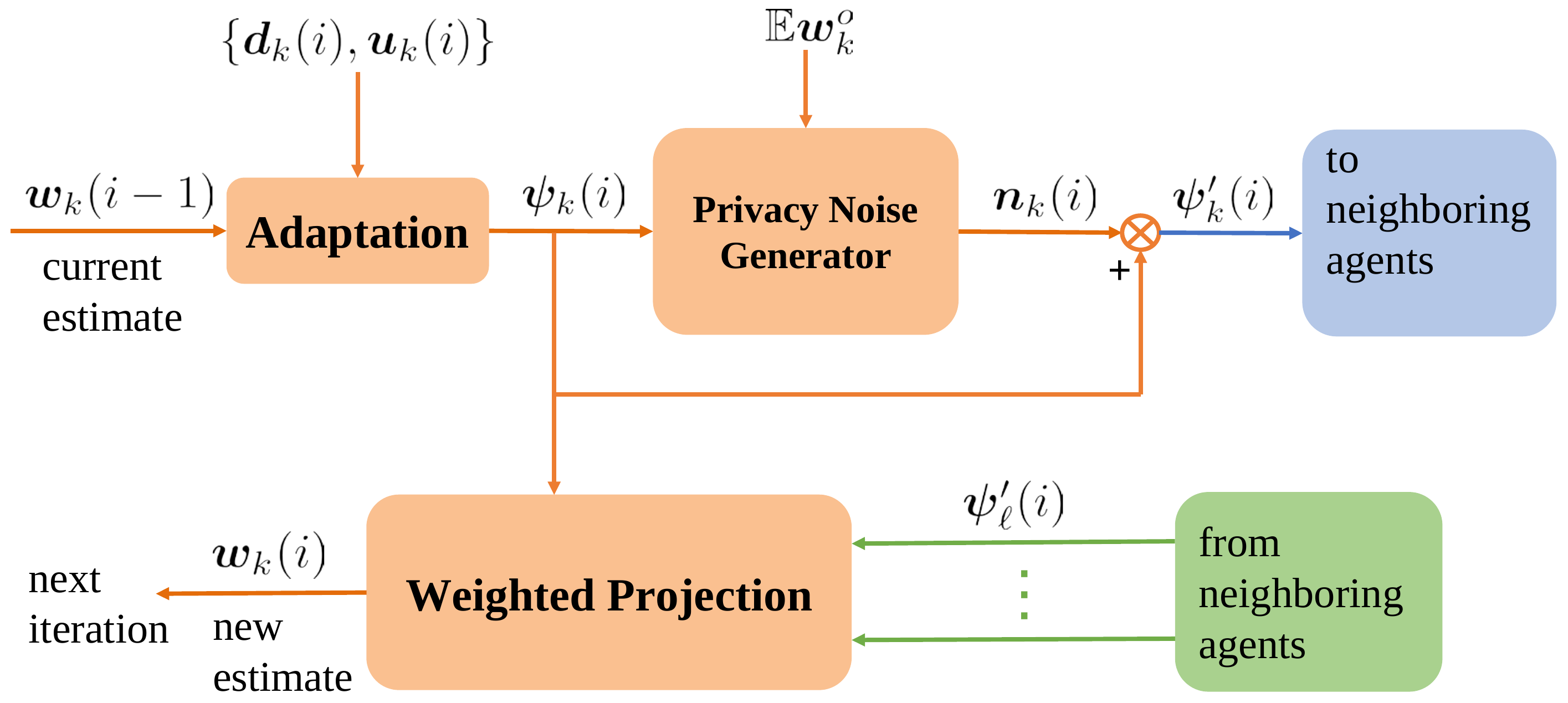}
\caption{Diagram of the proposed distributed and adaptive ATP$(\delta)$ algorithm.} 
\label{Fig:Algorithm}
\end{figure}

\section{Performance Analysis}\label{Sect: Performance Analysis}
In this section, we study the mean and mean-square behaviors of the proposed ATP($\delta$) algorithm with privacy noise powers $\braces*{\sigma_{n,k}^2(i)=\widehat{\sigma}_{n,k}^2(i)}$, where the quantities $\braces*{\widehat{\sigma}_{n,k}^2(i)}$ are defined by \cref{Eq:sigma_nk_2}. In addition, we evaluate its privacy-preserving performance.

\subsection{Mean Behavior Analysis}
We have the following theorem for the network mean performance.

\begin{Theorem}\label{Thm:Network mean performance}
Consider the distributed strategy \cref{adaptation,projection,Eq:psi'_l} with privacy noise powers $\braces*{\sigma_{n,k}^2(i)=\widehat{\sigma}_{n,k}^2(i)}$. Suppose that \cref{Ass:Collumn_rank_def,Ass:Independence_assumption,Ass:measurement_noise,Ass:additive_noise,Ass:Regression_data} hold, and 
\begin{align*}
\frac{1-1/\norm{\calP(i)}}{1+1/\norm{\calP(i)}}<\frac{\lambda_{\min}(R_{u,k})}{\lambda_{\max}(R_{u,k})}\leq 1
\end{align*}
for all agents $k=1,\ldots,N$. If the step-size $\mu_k$ is chosen to satisfy
\begin{align}\label{Cond:mean stability}
\frac{1-1/\norm{\calP(i)}}{\lambda_{\min}(R_{u,k})}<\mu_k<\frac{1+1/\norm{\calP(i)}}{\lambda_{\max}(R_{u,k})}
\end{align}
for each agent $k$, then
\begin{align}\label{Eq:mean_net_err}
\lim_{i\to\infty}\E\tbmw(i) = 0.
\end{align}
\end{Theorem}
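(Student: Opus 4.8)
The plan is to analyze the mean of the network error recursion \cref{Eq:Netrror_dynamics} by taking expectations and exploiting the independence assumptions. First I would take the expectation of both sides of \cref{Eq:Netrror_dynamics}. Because $\bmq(i)$ has zero mean by \cref{Eq:Eq(i)} and $\bmg(i)$ has zero mean by \cref{Eq:Eg(i)}, those two terms vanish in expectation. For the first term, the projection matrix $\calP(i)$ is determined by the weights $\{\omega_{\ell k}(i)\}$ and the fixed constraint data $\{\calD_k, b_k\}$, hence is deterministic (it does not depend on the random regressors $\bmu_k$, noises $\bmv_k$, or privacy noises $\bmn_k$); and by \cref{Ass:Regression_data,Ass:Independence_assumption} the factor $I_M-\calM\bm\calR_u(i)$ is independent of the past error $\tbmw(i-1)$ with mean $I_M-\calM\calR_u$ where $\calR_u$ is defined in \cref{Eq:calRue}. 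Therefore the mean recursion reads
\begin{align*}
\E\tbmw(i) = \calP(i)(I_M-\calM\calR_u)\,\E\tbmw(i-1).
\end{align*}

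Next I would establish convergence of this linear time-varying recursion to zero. The key is to bound the norm of the one-step transition operator $\calB(i) = \calP(i)(I_M-\calM\calR_u)$. Using submultiplicativity, $\norm{\calB(i)}\leq \norm{\calP(i)}\cdot\norm{I_M-\calM\calR_u}$. Since $\calM$ and $\calR_u$ are both block-diagonal, $\norm{I_M-\calM\calR_u} = \max_k \rho(I_{M_k}-\mu_k R_{u,k}) = \max_k \max\{|1-\mu_k\lambda_{\min}(R_{u,k})|,\,|1-\mu_k\lambda_{\max}(R_{u,k})|\}$. The step-size condition \cref{Cond:mean stability} is precisely engineered so that both endpoints satisfy $|1-\mu_k\lambda| < 1/\norm{\calP(i)}$: the lower bound on $\mu_k$ forces $1-\mu_k\lambda_{\min}(R_{u,k}) < 1/\norm{\calP(i)}$ and, together with positivity, $\mu_k\lambda_{\min}>1-1/\norm{\calP(i)}>0$; the upper bound forces $\mu_k\lambda_{\max}(R_{u,k})-1 < 1/\norm{\calP(i)}$. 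Combining these gives $\norm{I_M-\calM\calR_u} < 1/\norm{\calP(i)}$, so that $\norm{\calB(i)} < 1$ uniformly. Iterating, $\norm{\E\tbmw(i)} \leq \prod_{j=0}^{i}\norm{\calB(j)}\cdot\norm{\E\tbmw(-1)}\to 0$, yielding \cref{Eq:mean_net_err}.

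The main obstacle I anticipate is handling the time dependence of $\calP(i)$ rigorously, since $\norm{\calP(i)}$ appears in the stability bound yet varies with $i$ through the weights $\{\omega_{\ell k}(i)\}$. The delicate point is that the condition \cref{Cond:mean stability} must hold for every $i$, and one must ensure the product $\prod_j\norm{\calB(j)}$ actually drives the error to zero rather than merely staying bounded. Because the paper assumes the weights $\{\omega_{\ell k}(i)\}$ converge as $i\to\infty$, $\norm{\calP(i)}$ converges to a finite limit, so there is a uniform bound $\norm{\calP(i)}\leq \bar P$ and a uniform contraction factor $c := \sup_i \norm{\calB(i)} < 1$; then $\norm{\E\tbmw(i)}\leq c^{\,i+1}\norm{\E\tbmw(-1)}\to 0$. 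I would also need to verify at the outset that $I_M-\calM\calR_u$ and $\calP(i)$ genuinely decouple in expectation — that is, that $\calP(i)$ is deterministic given the weight sequence — which relies on the weights being chosen from quantities that are treated as deterministic (or conditioned upon) in the mean analysis; this independence is the one subtle modeling assumption I would state explicitly before pushing through the contraction argument.
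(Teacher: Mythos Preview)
Your proposal is correct and follows essentially the same route as the paper: take expectations of \cref{Eq:Netrror_dynamics}, kill the zero-mean terms, obtain $\E\tbmw(i)=\calP(i)(I_M-\calM\calR_u)\E\tbmw(i-1)$, and then bound $\norm{\calP(i)(I_M-\calM\calR_u)}\le\norm{\calP(i)}\,\norm{I_M-\calM\calR_u}<1$ via the step-size condition. If anything you are slightly more careful than the paper, which concludes from $\rho(\calA(i))<1$ without explicitly addressing the time-varying product; your use of the uniform operator-norm bound (which is what the referenced lemma actually yields) makes the contraction argument rigorous.
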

\begin{proof}
Let $\calA\parens{i}\triangleq\calP(i)\parens*{I_M-\calM\calR_{u}}$. Taking expectations on both sides of (\ref{Eq:Netrror_dynamics}) gives
\begin{align}\label{Eq:recursion_E_tbmw}
\E\tbmw(i) &= \calP(i)\E[(I_M-\calM\bm\calR_{u}(i))\tbmw(i-1)] -\calP(i)\calM\E\bmg(i)\nn
&\hspace{0.4cm}-\E\bmq(i)\nn
&\stackrel{\ft{(a)}}=\calA(i)\E\tbmw(i-1) -\calP(i)\calM\E\bmg(i)-\E\bmq(i)\nn
&\stackrel{\ft{(b)}}=\calA(i)\E\tbmw(i-1)
\end{align}
where in step (a) we used \cref{Ass:Regression_data}, and step (b) follows from \cref{Ass:Independence_assumption,Ass:measurement_noise,Ass:additive_noise}. Using the same proof as in \cref{lem:calV} of the supplementary material, $\rho(\calA\parens{i})<1$ if \cref{Cond:mean stability} holds. The proof is now complete.
\end{proof}

\subsection{Mean-square Performance Analysis}\label{Sect:Mean-square Performance Analysis}
In this subsection, we evaluate the steady-state network MSD, namely, the value of $\msd_\net(i)$ as the time instant $i\to\infty$. Let
\begin{multline}\label{Eq:calF}
\calF(i)=\E\,\big[\left((I_{M}-\bm\calR_{u}(i)\calM)\calP\T(i)\right)\\
\otimes\left((I_{M}-\bm\calR_{u}(i)\calM)\calP\T(i)\right)\big],
\end{multline}
where $\otimes$ denotes Kronecker product.

\begin{Assumption}\label{assumpt:sigma_conv}
The noise powers $\{\sigma_{n,k}^2(i)\}$ converge as $i\to\infty$ for all $k=1,\ldots,N$.
\end{Assumption}

We note that \cref{assumpt:sigma_conv} is satisfied if we set $\sigma_{n,k}^2(i)=\widehat{\sigma}_{n,k}^2(i)$ for all agents $k=1,\ldots,N$ and $i\geq0$. \cref{assumpt:sigma_conv} together with the assumption that $\omega_{\ell k}(i)$ defined by \cref{Eq:weights} converges as $i\to\infty$ for any agents $\ell,k=1,\cdots,N$ lead to the following conclusions:
\begin{enumerate}[(a)]
	\item $\Gamma(i)$ converges as $i\to\infty$.
	\item $\calF=\lim_{i\to\infty}\calF(i)$ exists.
\end{enumerate}
In addition, we assume that the quantity $\parens*{1-1/\norm{\calP(i)}}\big/{\lambda_{\min}(R_{u,k})}$ on the \acrshort{LHS} of \cref{Cond:mean stability} is sufficiently small, so that for sufficiently small step-sizes $\braces*{\mu_k}$, we have
\begin{align*}
    \calF=\calA\T(\infty)\otimes\calA\T(\infty) + O(\mu_\max^2)
\end{align*}
where $\mu_\max\triangleq\max\braces*{\mu_1,\ldots,\mu_N}$. Then if the step-sizes $\braces*{\mu_k}$ satisfy \cref{Cond:mean stability}, we have $\rho\parens*{\calA(\infty)}<1$, and $\rho\parens*{\calF}\thickapprox\parens*{\rho\parens*{\calA(\infty)}}^2<1$, i.e., the matrix $\calF$ is stable. Let 
\begin{align}\label{Eq:sigma_ss}
\sigma_{\textrm{ss}} = \frac{1}{N}\left(I_{M^2}-\calF\right)^{-1}\vect\left(I_{M}\right).
\end{align}

\begin{Theorem}\label{Thm:Network MSD performance}
Consider the distributed strategy \cref{adaptation,projection,Eq:psi'_l}. Suppose that \cref{Ass:Collumn_rank_def,Ass:measurement_noise,Ass:Regression_data,Ass:Independence_assumption,Ass:additive_noise,assumpt:sigma_conv} hold. Then,
\begin{align}\label{Eq:msd_net}
\msd_\net &\triangleq \lim_{i\to\infty}\msd_\net(i)\nn &=\lim_{i\to\infty}\left[\vect\left(\calP(i)\calM\calG\calM\calP\T(i)\right)\right]\T\sigma_{\textrm{ss}}\nn
&\hspace{0.4cm}+\lim_{i\to\infty}\left[\vect\left(\Gamma(i)\right)\right]\T\sigma_{\textrm{ss}}.
\end{align}
\end{Theorem}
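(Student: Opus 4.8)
The plan is to convert the weighted-norm recursion \cref{Eq:weighted_norm} into a linear recursion for a vectorized second-moment quantity and then pass to the steady state. First I would set $\sigma=\vect(\Sigma)$ and observe that the weight map $\Sigma\mapsto\Sigma'(i)$ appearing on the \gls{RHS} of \cref{Eq:weighted_norm} is linear. Writing $A_i=(I_M-\bm\calR_{u}(i)\calM)\calP\T(i)$, so that $\Sigma'(i)=\E[A_i\Sigma A_i\T]$ by the symmetry of $\calM$ and $\bm\calR_{u}(i)$, the identity $\vect(A\Sigma A\T)=(A\otimes A)\vect(\Sigma)$ gives $\vect(\Sigma'(i))=\calF(i)\sigma$ with $\calF(i)$ as in \cref{Eq:calF}. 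Using $\trace{\Gamma(i)\Sigma}=[\vect(\Gamma(i))]\T\sigma$ and, after cyclically permuting the trace, $\trace{\calG\calM\calP\T(i)\Sigma\calP(i)\calM}=[\vect(\calP(i)\calM\calG\calM\calP\T(i))]\T\sigma$, the recursion \cref{Eq:weighted_norm} becomes, for every $\sigma$,
\[
\sigma\T v(i)=\sigma\T\calF\T(i)\,v(i-1)+b(i)\T\sigma,
\]
where $v(i)\triangleq\E[\vect(\tbmw(i)\tbmw\T(i))]$ and $b(i)\triangleq\vect(\Gamma(i))+\vect(\calP(i)\calM\calG\calM\calP\T(i))$. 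Since $\sigma$ is arbitrary, this is equivalent to the linear time-varying recursion $v(i)=\calF\T(i)v(i-1)+b(i)$.

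The second step is to let $i\to\infty$. By the convergence of $\Gamma(i)$ and $\calF(i)$ recorded in conclusions (a)--(b) preceding the theorem, and of $\calP(i)$ (which follows from the assumed convergence of the weights $\omega_{\ell k}(i)$), we have $\calF(i)\to\calF$ and $b(i)\to b\triangleq\vect(\Gamma(\infty))+\vect(\calP(\infty)\calM\calG\calM\calP\T(\infty))$, where I write $\Gamma(\infty)=\lim_i\Gamma(i)$ and $\calP(\infty)=\lim_i\calP(i)$. Under the stated hypotheses $\calF$ is stable, $\rho(\calF)<1$, as argued in the paragraph preceding the theorem. I would then invoke the standard fact that an asymptotically time-invariant linear recursion whose limiting coefficient matrix is stable and whose forcing term converges has a convergent state; hence $v(i)\to v(\infty)$, and passing to the limit in $v(i)=\calF\T(i)v(i-1)+b(i)$ yields $v(\infty)=\calF\T v(\infty)+b$, i.e.\ $v(\infty)=(I_{M^2}-\calF\T)^{-1}b$.

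Finally, since $\msd_\net(i)=\E\norm{\tbmw(i)}^2_{I_M/N}=\tfrac{1}{N}[\vect(I_M)]\T v(i)$ by \cref{Eq:transient_MSD}, I would take $i\to\infty$ and use $\tfrac{1}{N}[\vect(I_M)]\T(I_{M^2}-\calF\T)^{-1}=\sigma_{\textrm{ss}}\T$, which follows from the definition \cref{Eq:sigma_ss} together with $(I_{M^2}-\calF\T)^{-1}=[(I_{M^2}-\calF)^{-1}]\T$. This gives $\msd_\net=\sigma_{\textrm{ss}}\T b=b\T\sigma_{\textrm{ss}}$, which is exactly \cref{Eq:msd_net}.

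The step I expect to be the main obstacle is the convergence/interchange-of-limits argument in the second paragraph: because the recursion carries a time-varying coefficient $\calF\T(i)$ and a time-varying forcing term $b(i)$, I cannot simply unroll it against a fixed $\calF$ and sum a geometric series. I would have to show that the transition products $\calF\T(i)\calF\T(i-1)\cdots$ are eventually contracting --- which follows from $\calF(i)\to\calF$ with $\rho(\calF)<1$ --- so that the contribution of the initial condition $v(-1)$ decays and the convergent forcing term drives $v(i)$ to the fixed point of the limiting recursion. The remaining manipulations are routine vectorization algebra.
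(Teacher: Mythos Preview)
Your proof is correct and follows essentially the same vectorization route as the paper: both convert \cref{Eq:weighted_norm} into a linear relation via $\vect(\Sigma'(i))=\calF(i)\sigma$ and the trace--vec identities, then pass to the limit using $\rho(\calF)<1$. The only cosmetic difference is that you track the second-moment state $v(i)=\E[\vect(\tbmw(i)\tbmw\T(i))]$ and solve $v(\infty)=(I_{M^2}-\calF\T)^{-1}b$, whereas the paper keeps the recursion in the weight variable $\sigma$ and substitutes $\sigma=\sigma_{\textrm{ss}}$ at the end---these are dual readings of the same identity, and your handling of the time-varying limit is in fact more explicit than the paper's.
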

\begin{proof}
Note that \cite[p.762]{Say:J14}
\begin{align}
\trace{AB}&=\left[\vect(B\T)\right]\T\vect(A)\label{Eq:vect_trace}\\
\vect(ACB)&=(B\T\otimes A)\vect(C)\label{Eq:vect_product}
\end{align}
for any matrices $\{A,B,C\}$ of compatible sizes. Let $\sigma = \vect(\Sigma)$. Then, it follows from \cref{Eq:vect_product} that
\begin{align*}
\sigma'(i)\triangleq\vect\left(\Sigma'(i)\right)
=\calF(i)\sigma.
\end{align*}
Now, we rewrite the recursion \cref{Eq:weighted_norm} as
\begin{align}\label{Eq:weighted_norm_vec}
\E\norm{\tbmw(i)}^2_\sigma &= \E\norm{\tbmw(i-1)}^2_{\calF(i)\sigma}+\left[\vect\left(\Gamma(i)\right)\right]\T\sigma\nn
&\hspace{0.4cm}+\left[\vect\left(\calP(i)\calM\calG\calM\calP\T(i)\right)\right]\T\sigma,
\end{align}
where we used the property \cref{Eq:vect_trace}. From \cref{assumpt:sigma_conv}, we obtain
\begin{align*}
\lim_{i\to\infty}\E\norm{\tbmw(i)}^2_{\left(I_{M^2}-\calF\right)\sigma} &= \lim_{i\to\infty}\left[\vect\left(\calP(i)\calM\calG\calM\calP\T(i)\right)\right]\T\sigma\nn
&\hspace{0.4cm}+\lim_{i\to\infty}\left[\vect\left(\Gamma(i)\right)\right]\T\sigma.
\end{align*}
By setting $\sigma=\sigma_{\textrm{ss}}$, which is defined by \cref{Eq:sigma_ss} on the both sides, we arrive at the desired result \cref{Eq:msd_net} and the proof is complete.
\end{proof}

\subsection{Privacy-preserving Performance}\label{Sect: Performance Metrics of Inference Privacy Preservation}
In this subsection, we introduce the performance metrics for inference privacy preservation, and evaluate the privacy-preserving performance of the proposed ATP$(\delta)$ algorithm.

Let $\bPhi_\ell(i)$ be the set of information that an agent $\ell$ has at time instant $i\geq0$ to infer its neighboring agent $k$'s task $\bmw_k^o$. To quantify the privacy-preserving performance of any scheme, we evaluate, in terms of mean-square error, how well agent $\ell$ can estimate $\bmw_k^o$. Then we average over the whole network. The network inference privacy error is defined using
\begin{align}
\xi_{\textrm{net}}(i)&=\frac{1}{N}\sum_{k=1}^N\frac{1}{|\calN_k|-1}\sum_{\ell\in\calN_k\backslash\{k\}}\E\norm{ \bmw_k^o-\hbmw_{k\mid\bPhi_\ell}(i) }^2,\label{Eq:inference_privacy}
\end{align}
where $\hbmw_{k\mid\bPhi_\ell}(i)$ is the least mean-square estimate of $\bmw_k^o$ given $\bPhi_\ell(i)$. A larger $\xi_{\textrm{net}}(i)$ implies better average privacy preservation across the network at time instant $i$.

In our numerical experiments, we compare the performance of our proposed ATP$(\delta)$ algorithm with the proposed ATP(0) algorithm, the multitask diffusion algorithm \cite{NasRicFer:J17} (denoted as MDA for convenience) and the non-cooperative LMS algorithm \cref{Eq:LMS} (denoted as NoCoop). Since the unknown coefficient vectors $\braces*{\bmw_k^o}$ are linearly related across the network, agent $\ell$ can make use of all neighboring estimates $\braces*{\bpsi'_s(i)}$ that are available to it at time instant $i$ to infer its neighbor's task $\bmw_k^o$ for any $k\in\calN_\ell$. However, for simplicity, in order to evaluate the privacy-preserving performance of the different schemes, we constrain the information set $\bPhi_\ell(i)$ as follows:
\begin{itemize}
	\item For ATP$(\delta)$, we let $\bPhi_\ell(i)=\{\bpsi_\ell(i),\bpsi'_k(i)\}$, i.e., each agent $\ell$ is memoryless across time and spatial domains.
	\item For ATP$(0)$ and MDA, we let $\bPhi_\ell(i)=\{\bpsi_\ell(i),\bpsi_k(i)\}$ since no privacy noises are added in these schemes.
	\item For NoCoop, we let $\bPhi_\ell(i)=\{\bmw_\ell(i)\}$ since there is no information exchange between agents. 
\end{itemize}

Let
\begin{gather}
\calW =\E[(\bmw^o-\E\bmw^o)(\bmw^o-\E\bmw^o)\T],\label{Eq:calW}\\
\Psi(i) =\E[\bpsi(i)\bpsi\T(i)]\label{Eq:Psi(i)},\\
\calV(i)=\E[(\bmw^o-\E\bmw^o)(\tbpsi(i)-\E\tbpsi(i))\T],\label{Eq:calV(i)}\\
\bmr_{du}(i) = \col{ \bmu_k(i)\bmd_k(i)}_{k=1}^N=\bm\calR_{u}(i)\bmw^o+\bmg(i),\label{Eq:r_du}\\
\calB(i) = (I_{M}-\calM\calR_{u})\calP(i).\label{Eq:calB(i)}
\end{gather} 
Let $\bpsi_{\braces*{k,\ell}}(i)=\brk*{\bpsi_k\T(i),\bpsi_\ell\T(i)}\T$. We also define
\begin{align}\label{Eq:U_k_kl}
    U_{k\braces*{k,\ell}}(i) =\E[\parens*{\bmw_k^o-\E\bmw_k^o}\parens*{\bpsi_{\braces*{k,\ell}}(i)-\E\bpsi_{\braces*{k,\ell}}(i)}\T],
\end{align}
 \vspace{-0.5cm}
\begin{align}\label{Eq:X_kl_kl}    
    X_{{\braces*{k,\ell}}{\braces*{k,\ell}}}(i) &=\E\Big[\parens*{\bpsi_{\braces*{k,\ell}}(i)-\E\bpsi_{\braces*{k,\ell}}(i)}\nn
    &\hspace{0.7cm}\times\parens*{\bpsi_{\braces*{k,\ell}}(i)-\E\bpsi_{\braces*{k,\ell}}(i)}\T\Big],
\end{align}
and 
\begin{align*}
    \widecheck{R}_{n,k}(i)&=\left[\begin{array}{cc}
     R_{n,k}(i)    &0  \\
        0 & 0
    \end{array}\right]
\end{align*}
where the quantity $R_{n,k}(i)$ is defined by \cref{Eq:R_nk_i}. Let
\begin{multline}
 \calH(i) = \E\,\big[\left(\left(I_{M}-\calM\bm\calR_{u}(i+1)\right) \calP(i)\right)\\
 \otimes\left(\left(I_{M}-\calM\bm\calR_{u}(i+1)\right) \calP(i)\right)\big],\label{Eq:H(i)}  
\end{multline}
\vspace{-0.8cm}
\begin{multline}
 \calX(i)=\E\,\big[\left(\left(I_{M}-\calM\bm\calR_{u}(i+1)\right) f(i)\right)\\
 \otimes\left(\left(I_{M}-\calM\bm\calR_{u}(i+1)\right)\calP(i)\right)\big],\label{Eq:X(i)}  
\end{multline}
\vspace{-0.8cm}
\begin{multline}
\calX'(i)=\E\,\big[\left(\left(I_{M}-\calM\bm\calR_{u}(i+1)\right)\calP(i)\right)\\
\otimes\left(\left(I_{M}-\calM\bm\calR_{u}(i+1)\right) f(i)\right)\big],\label{Eq:X'(i)}    
\end{multline}
\begin{align}
\calY(i)&=\E[\left(\calM\bmr_{du}(i+1)\right)\otimes\left(\left(I_{M}-\calM\bm\calR_{u}(i+1)\right) \calP(i)\right)],\label{Eq:Y(i)}\\
\calY'(i)&=\E[\left(\left(I_{M}-\calM\bm\calR_{u}(i+1)\right) \calP(i)\right)\otimes\left(\calM\bmr_{du}(i+1)\right)],\label{Eq:Y'(i)}
\end{align}
and
\begin{align}
&C_1(i)=\E\,\big[\left(I_{M}-\calM\bm\calR_{u}(i+1)\right) f(i)\nn
&\hspace{1.7cm}\times f\T(i)\left(I_{M}-\bm\calR_{u}(i+1)\calM\right)\big],\nn 
&C_2(i)=\E[\left(I_{M}-\calM\bm\calR_{u}(i+1)\right) f(i)\bmr\T_{du}(i+1)]\calM,\nn
&C_3(i)=\E[\calM\bmr_{du}(i+1)f\T(i)\left(I_{M}-\bm\calR_{u}(i+1)\calM\right)],\nn
&\hspace{0.4cm}C_4=\calM\E[\bmr_{du}(i+1)\bmr_{du}\T(i+1)]\calM,\nn
&c(i)=\vect\left(C_1(i)\right)-\vect\left(C_2(i)\right)-\vect\left(C_3(i)\right)+\vect\left(C_4\right),\label{Eq:c(i)}
\end{align}
all of which except $C_4$ depend on the quantities $\{\calP(i),f(i)\}$ at time instant $i$. We also define
\begin{align}
    \calZ=\E[\left(\left(I_{M}-\calM\bm\calR_{u}(i+1)\right) \right)\otimes\left(\left(I_{M}-\calM\bm\calR_{u}(i+1)\right)\right)].\label{Eq:Z(i)}
\end{align}
In the following \cref{Thm:inference privacy preservation}, we evaluate the privacy-preserving performance of the proposed ATP($\delta$) algorithm.

\begin{Theorem}\label{Thm:inference privacy preservation}
Consider the distributed strategy \cref{adaptation,projection,Eq:psi'_l}. Suppose that \cref{Ass:measurement_noise,Ass:Regression_data,Ass:Independence_assumption,Ass:additive_noise} hold. The network inference error at each time instant $i\geq0$ is
\begin{multline}\label{Eq:xi_net_ATP_delta}
    \xi_{\textrm{net}}^{\textrm{ATP($\delta$)}}(i)=\frac{1}{N}\sum_{k=1}^N\frac{1}{|\calN_k|-1}\sum_{\ell\in\calN_k\backslash\{k\}}\Tr\bigg(W_{kk}\\ -U_{k\braces*{k,\ell}}(i)\parens*{X_{{\braces*{k,\ell}}{\braces*{k,\ell}}}(i)+\widecheck{R}_{n,k}(i)}^{-1}U_{k\braces*{k,\ell}}\T(i)\bigg)
\end{multline}
where
\begin{align}
     U_{k\braces*{k,\ell}}(i)&=\brk*{\brk*{\calW-\calV(i)}_{k,k},\brk*{\calW-\calV(i)}_{k,\ell}},\label{Eq:R_wpsi_kkl(i)}\\
\calV(i+1) &= \calV(i)\calB\T(i),\label{Eq:calV}\\
\calV(0) &= \calW\,(I_{M}-\calM\mathcal{R}_{u}),\label{intial}
\end{align}
and where
\begin{dmath}[label={Eq:R_psi_kl(i)}]
          X_{{\braces*{k,\ell}}{\braces*{k,\ell}}}(i)= \left[\!\!\begin{array}{cc}
            \brk*{\Psi(i)\!-\!\E\bpsi(i)\parens*{\E\bpsi(i)}\T}_{k,k}   & \! \brk*{\Psi(i)\!-\!\E\bpsi(i)\parens*{\E\bpsi(i)}\T}_{k,\ell}\\
            \brk*{\Psi(i)\!-\!\E\bpsi(i)\parens*{\E\bpsi(i)}\T}_{\ell,k}   &\! \brk*{\Psi(i)\!-\!\E\bpsi(i)\parens*{\E\bpsi(i)}\T}_{\ell,\ell}
          \end{array}\!\!\right]
\end{dmath}
with
\begin{align}
    \E\bpsi(0)&=\calM\calR_{u}\E\bmw^o,\label{Eq:Epsi_0}\\
    \E\bpsi(i+1) &= \left(I_{M}-\calM\calR_{u}\right) \calP(i)\E\bpsi(i) \nn
    &\hspace{0.4cm}- \left(I_{M}-\calM\calR_{u}\right) f(i)  +\calM\calR_{u}\E\bmw^o,\label{Eq:recursionpsi_i}
\end{align}
and
\begin{align}
    &\Psi(0)=\calM\E[\bm\calR_{u}(0)\,\E[\bmw^o\left(\bmw^o\right)\T]\bm\calR_{u}(0)]\calM+\calM\calG\calM,\label{Eq:vec_Psi_0}\\
    &\vect\left(\Psi(i+1)\right)=\calH(i)\vect\left(\Psi(i)\right)+\calZ\vect(\Gamma(i))+c(i)\nn
    &\hspace{2.55cm}+\left(\calY(i)-\calX(i)-\calX'(i)+\calY'(i)\right)\,\E\bpsi(i).\label{Eq:recursion_vec_Psi_i}
\end{align}
\end{Theorem}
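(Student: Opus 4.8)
The plan is to handle each ordered neighbor pair $(k,\ell)$ with $\ell\in\calN_k\backslash\{k\}$ separately and reduce the inner summand of \cref{Eq:inference_privacy} to a standard linear least-mean-square estimation problem. For ATP($\delta$) the information available to agent $\ell$ is $\{\bpsi_\ell(i),\bpsi'_k(i)\}$, so I would stack these into one observation vector and note that $\bpsi'_k(i)=\bpsi_k(i)+\bmn_k(i)$ differs from the noiseless pair $\bpsi_{\braces*{k,\ell}}(i)=\brk*{\bpsi_k\T(i),\bpsi_\ell\T(i)}\T$ only by the privacy noise loaded on the $k$-block. Since $\hbmw_{k\mid\bPhi_\ell}(i)$ is linear in the observation, the LMMSE error formula \cite[p.66]{Say:B08} (already used to obtain \cref{Eq:MSD_llmse}) gives the summand as $\Tr(W_{kk}-U X^{-1}U\T)$, where $U$ is the cross-covariance of $\bmw_k^o$ with the observation and $X$ its covariance. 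The two remaining tasks are to identify $U$ and $X$ and to obtain closed-form recursions for the moments they contain.

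For the cross-covariance, since $\bmn_k(i)$ is zero-mean and independent of $\bmw^o$ by \cref{Ass:additive_noise,Ass:Independence_assumption}, the noise contributes nothing, so $U=U_{k\braces*{k,\ell}}(i)$ is the $(k,k)$- and $(k,\ell)$-blocks of $\cov(\bmw^o,\bpsi(i))$. Expanding $\calV(i)$ in \cref{Eq:calV(i)} through $\tbpsi(i)=\bmw^o-\bpsi(i)$ shows $\cov(\bmw^o,\bpsi(i))=\calW-\calV(i)$, which yields \cref{Eq:R_wpsi_kkl(i)}. For the observation covariance, independence of $\bmn_k(i)$ again splits $X$ into the noiseless covariance of $\bpsi_{\braces*{k,\ell}}(i)$, assembled from the blocks of $\Psi(i)-\E\bpsi(i)(\E\bpsi(i))\T$ as in \cref{Eq:R_psi_kl(i)}, plus the noise term $\widecheck{R}_{n,k}(i)$ loading $R_{n,k}(i)$ on the $k$-block only. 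This establishes the form \cref{Eq:xi_net_ATP_delta} modulo the moment recursions.

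Next I would derive the first-moment recursions, which are the routine part. Taking expectations in the combined adapt-then-project update $\bpsi(i+1)=(I_M-\calM\bm\calR_{u}(i+1))(\calP(i)\bpsi(i)+\bmq(i)-f(i))+\calM\bmr_{du}(i+1)$ and using \cref{Ass:Regression_data,Ass:Independence_assumption} (so $\bm\calR_{u}(i+1)$ factors through $\calR_{u}$ while $\bmq(i)$ and the $\bmg$-part of $\bmr_{du}$ vanish) gives \cref{Eq:recursionpsi_i}, with initialization \cref{Eq:Epsi_0} from $\bmw(-1)=0$. For $\calV(i)$, I would left-multiply the $\tbpsi$-recursion \cref{Eq:psierror_dynamics} by $(\bmw^o-\E\bmw^o)$, drop the noise terms by independence and zero mean, and factor $\E[\bm\calR_{u}(i+1)]=\calR_{u}$, obtaining the linear recursion \cref{Eq:calV} with $\calB(i)$ of \cref{Eq:calB(i)}; the $i=0$ case with $\bmw(-1)=0$ gives \cref{intial}.

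The main obstacle is the second-moment recursion \cref{Eq:recursion_vec_Psi_i} for $\Psi(i)$ (cf.\ \cref{Eq:Psi(i)}). I would write the one-step update as a sum of four pieces — the homogeneous part $(I_M-\calM\bm\calR_{u}(i+1))\calP(i)\bpsi(i)$, the privacy-noise part $(I_M-\calM\bm\calR_{u}(i+1))\bmq(i)$, the deterministic offset $-(I_M-\calM\bm\calR_{u}(i+1))f(i)$, and the driving term $\calM\bmr_{du}(i+1)$ — form the outer product, take expectations, and vectorize using \cref{Eq:vect_product}. Whiteness and independence (\cref{Ass:measurement_noise,Ass:additive_noise}) kill the noise cross-terms; the homogeneous self-term produces $\calH(i)\vect(\Psi(i))$, the privacy-noise self-term produces $\calZ\vect(\Gamma(i))$, and the offset/driving self- and cross-terms assemble into $c(i)$ via $C_1,\dots,C_4$ as in \cref{Eq:c(i)}. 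The crux is the cross-terms between the homogeneous $\bpsi(i)$-piece and the offset/driving pieces: pairing with the deterministic offset cleanly gives $-\calX(i)\E\bpsi(i)$ and $-\calX'(i)\E\bpsi(i)$, since $f(i)$ is deterministic and $\bpsi(i)$ is independent of $\bm\calR_{u}(i+1)$, but pairing with $\calM\bmr_{du}(i+1)=\calM(\bm\calR_{u}(i+1)\bmw^o+\bmg(i+1))$ is delicate because $\bmw^o$ is shared by $\bpsi(i)$. Reducing this last cross-term to the compact form $(\calY(i)+\calY'(i))\E\bpsi(i)$ — effectively replacing the cross-moment $\E[\bpsi(i)(\bmw^o)\T]$ by $\E\bpsi(i)(\E\bmw^o)\T$ — is where the real work and the key modeling step reside; verifying this reduction, together with the initialization \cref{Eq:vec_Psi_0}, completes the proof.
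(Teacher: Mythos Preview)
Your approach mirrors the paper's proof step for step: the LMMSE error formula for each neighbor pair, block extraction of $U_{k\{k,\ell\}}(i)$ from $\calW-\calV(i)$, block extraction of $X_{\{k,\ell\}\{k,\ell\}}(i)$ from $\Psi(i)-\E\bpsi(i)(\E\bpsi(i))\T$, the $\calV(i)$ recursion via \cref{Eq:psierror_dynamics}, the first-moment recursion by taking expectations in the $\bpsi$-update, and the outer-product expansion of that update followed by vectorization to obtain \cref{Eq:recursion_vec_Psi_i}.

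On the step you single out as the crux --- the cross-term between the homogeneous $\bpsi(i)$-piece and $\calM\bmr_{du}(i+1)$ --- the paper does exactly what you anticipate: in its displayed expansion of $\Psi(i+1)$ it writes the corresponding term as $\E[(I_M-\calM\bm\calR_u(i+1))\calP(i)\,\E\bpsi(i)\,\bmr_{du}\T(i+1)]\calM$ (and the transposed companion), i.e.\ it replaces $\bpsi(i)$ by $\E\bpsi(i)$ inside the expectation and then vectorizes to obtain $\calY(i)\E\bpsi(i)$ and $\calY'(i)\E\bpsi(i)$. The only justification offered is a reference to \cref{Ass:measurement_noise,Ass:Regression_data,Ass:Independence_assumption,Ass:additive_noise}; no separate argument is given to reconcile the coupling of $\bpsi(i)$ and $\bmw^o$ through $\bmr_{du}(i+1)=\bm\calR_u(i+1)\bmw^o+\bmg(i+1)$ that you correctly point out. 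So the ``key modeling step'' you flag is precisely the substitution the paper makes, and the paper supplies no justification beyond invoking the stated independence assumptions.
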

\begin{proof}
See \cref{Sup:evaluation of privacy} of the supplementary material.
\end{proof}

\section{Simulation Results}\label{Sec:Simulation results}
In this section, we compare performance of the proposed ATP$(\delta)$ algorithm in terms of network inference privacy and network MSD against ATP$(0)$, MDA \cite{NasRicFer:J17} and NoCoop \cref{Eq:LMS} in two types of networks: 1) a line network with a large number of linear equality constraints; and 2) a very dense network with a small number of linear equality constraints. We also test the tracking performance of the proposed distributed and adaptive ATP$(\delta)$ algorithm in the case of changing statistics.

\subsection{Selection of Weights for Weighted Projection}
Note that the projection step \cref{projection} at each agent $k$ involves noisy intermediate estimates $\{\bpsi'_{\ell}(i)\}$ from its neighbors $\{\ell\in\calN_k\backslash\{k\}\}$ and the uncontaminated intermediate estimate $\bpsi_k(i)$. In view of the fact that the noisy intermediate estimate $\bpsi'_{\ell}(i)$ involves a privacy noise $\bmn_{\ell}(i)$ with variance $\sigma_{n,\ell}^2(i)$, we set the weights $\braces*{\omega_{\ell k}(i)}$ defined by \cref{Eq:weight_matrix} to
\begin{align}\label{Eq:omega_lk_i}
    \omega_{\ell k}(i)=\left\{\begin{array}{ll}
    \frac{e^{-\sigma_{n,\ell}^2(i)}}{\sum_{\ell\in\calN_k\backslash\{k\}}e^{-\sigma_{n,\ell}^2(i)}+1},     & \text{if $\ell\in\calN_k\backslash\{k\}$}  \\
    \frac{1}{\sum_{\ell\in\calN_k\backslash\{k\}}e^{-\sigma_{n,\ell}^2(i)}+1},     & \text{if $\ell=k$}
    \end{array}
    \right.
\end{align}
which satisfies the requirements in \cref{Eq:weights}.
\subsection{Line Network}
As shown in \cref{Fig:line_network}, we consider a line network which consists of $N=12$ agents. The agents in the network are involved in $Q=11$ linear equality constraints. Let $\calI_q$ be the set of agents involved in the $q$-th constraint for any $q=1,\ldots,Q$. Then in the case of a line network, each $\calI_q$ only contains two neighboring agents that are connected by an edge. Each linear equality constraint is of the form \cite{NasRicFer:J17}:
\begin{align}\label{Eq:linear_eq}
\sum_{k\in\calI_q}d_{qk}\bmw_k^o+b_q\bone_{3\times1}=0_{3\times1},\quad q = 1,\ldots,Q
\end{align}
with the scalar parameters $\{d_{qk},b_q\}$ randomly selected from $[-3,-1]\cup[1,3]$. The lengths of the unknown parameter vectors $\{\bmw_k^o\}$ are $\{M_k\equiv 3\}$.
The random data $\{\bmu_k(i),\bmv_k(i)\}$ are independent, normally distributed with zero mean, and white over time and space. Let
\begin{align*}
\mbox{SNR}_k=10\log_{10}\left(\E[(\bmu_k\T(i)\bmw_k^o)^2]/\sigma_{v,k}^2\right)
\end{align*}
be the signal-to-noise ratio (SNR) at agent $k$. Then, the parameters $\{R_{u,k}, W_{kk}, \E\bmw_k^o,\sigma_{v,k}^2\}$ are adjusted to make $\{\mbox{SNR}_k\}$ as shown in \cref{Fig:SNR_line}. For the step-size parameters, we set $\{\mu_k\equiv0.02\}$ for all the tested algorithms except MDA, where we set $\{\mu_k/j_k\equiv0.02\}$ to ensure the same step-size in the adaptation step for all the above-mentioned algorithms. Let $\delta_k \triangleq \rho\,\trace{W_{kk}}$ with $0\leq\rho\leq1$. Then, for the ATP$(\delta)$ algorithm, we set $\rho=0.1,\,\rho = 0.6,\,\rho=0.85$, respectively, to compare performance under different privacy thresholds $\braces*{\delta_k}$.

\cref{Fig:MSD_line} shows the network MSD learning curves of the tested strategies, which are averaged over 1,000 independent realizations of $\{\bmw_k^o\}$. We observe that 1) simulation results match well with theoretical findings, 2) by adding privacy noises, the proposed ATP$(\delta)$ algorithm converges to a higher MSD level than ATP(0) and MDA \cite{NasRicFer:J17}, 3) larger privacy thresholds result in a higher steady-state network MSD. \cref{Fig:privacy_line} plots the network inference error, defined by \cref{Eq:inference_privacy}. We observe that 1) the network inference privacy performance of ATP(0) and MDA \cite{NasRicFer:J17} are similar, but worse (lower values of $\{\xi_{\textrm{net}}(i)\}$) than the proposed ATP$(\delta)$, 2) by increasing privacy thresholds, the proposed ATP$(\delta)$ algorithm shows better network inference privacy performance. These results demonstrate that the proposed ATP$(\delta)$ algorithm is able to balance the trade-off between network MSD and privacy protection. 
\begin{figure*}[!htb]
    \centering
    \begin{subfigure}[!htb]{0.45\textwidth}
        \centering
        \includegraphics[height=1.7in]{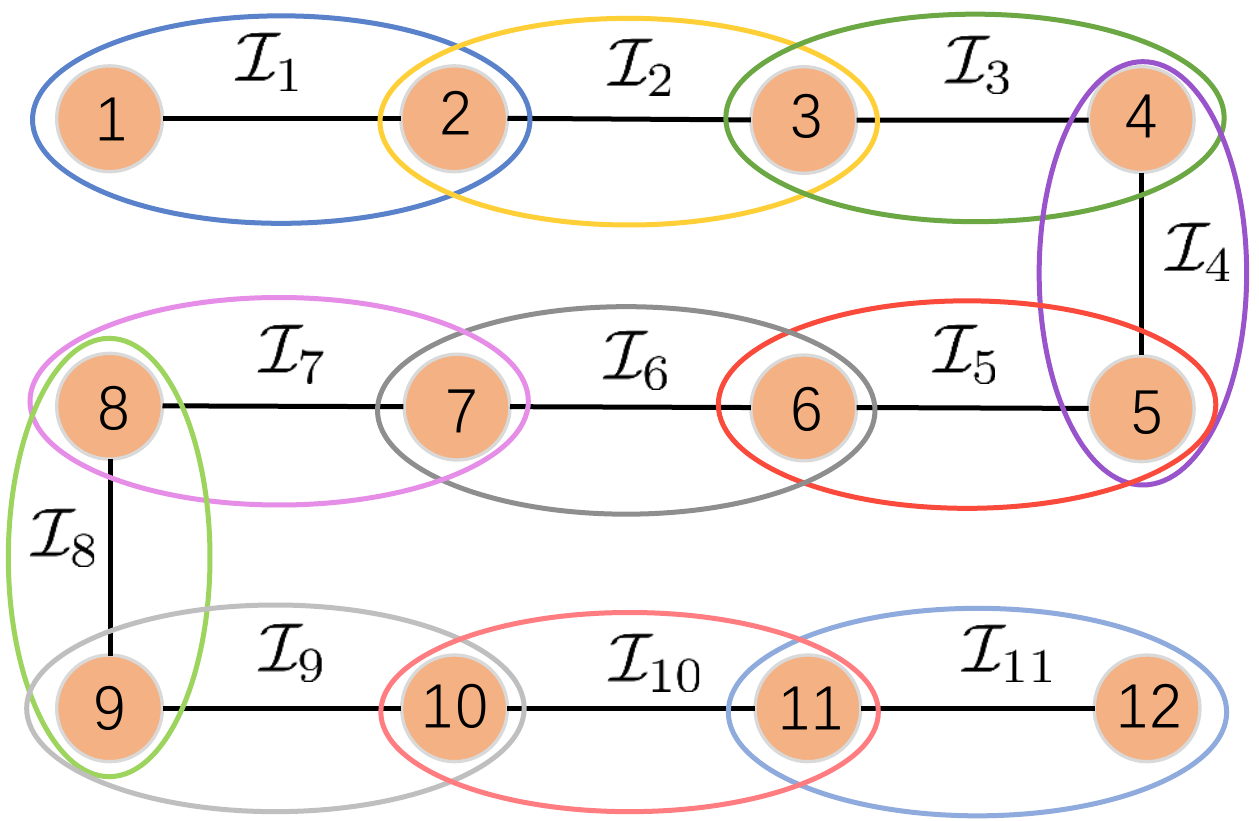}
        \caption{}
        \label{Fig:line_network}
    \end{subfigure}%
    ~ 
    \begin{subfigure}[!htb]{0.5\textwidth}
        \centering
        \includegraphics[height=1.9in]{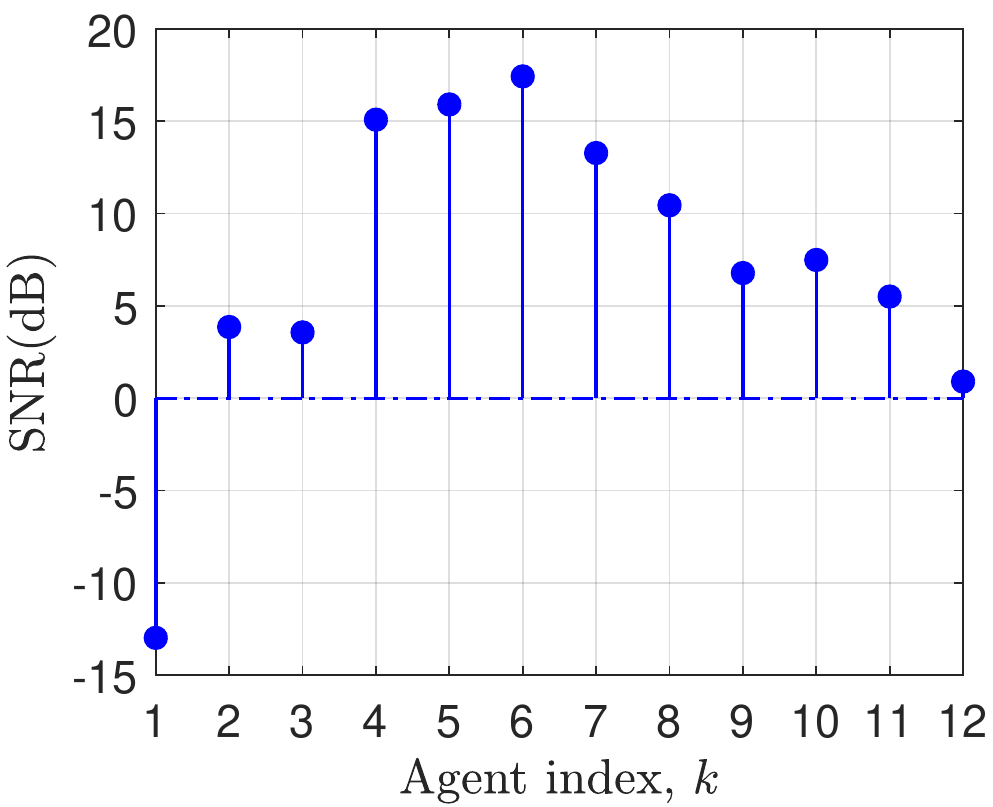}
        \caption{}
        \label{Fig:SNR_line}
    \end{subfigure}
    \caption{(a) A line network and (b) SNRs across the agents in a line network.}
\end{figure*}
\begin{figure}[!htb]
\centering
\includegraphics[width=3.0in]{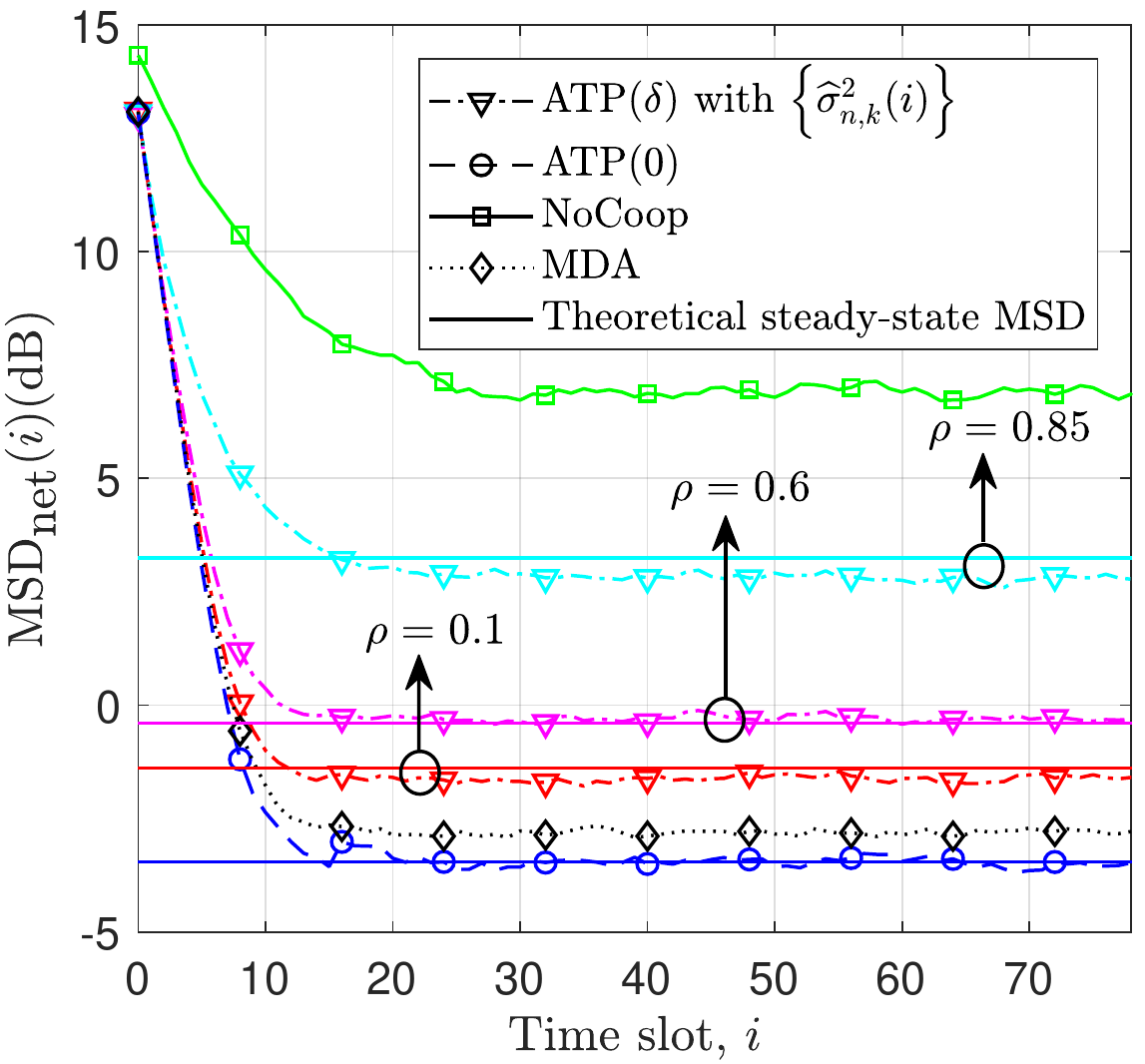}
\caption{Network MSD learning curves of MDA \cite{NasRicFer:J17}, NoCoop, the proposed ATP$(\delta)$ under different parameter settings for $\braces*{\delta_k}$, and ATP(0) in a line network.}
\label{Fig:MSD_line}
\end{figure}
\begin{figure}[!htb]
\centering
\includegraphics[width=3.0in]{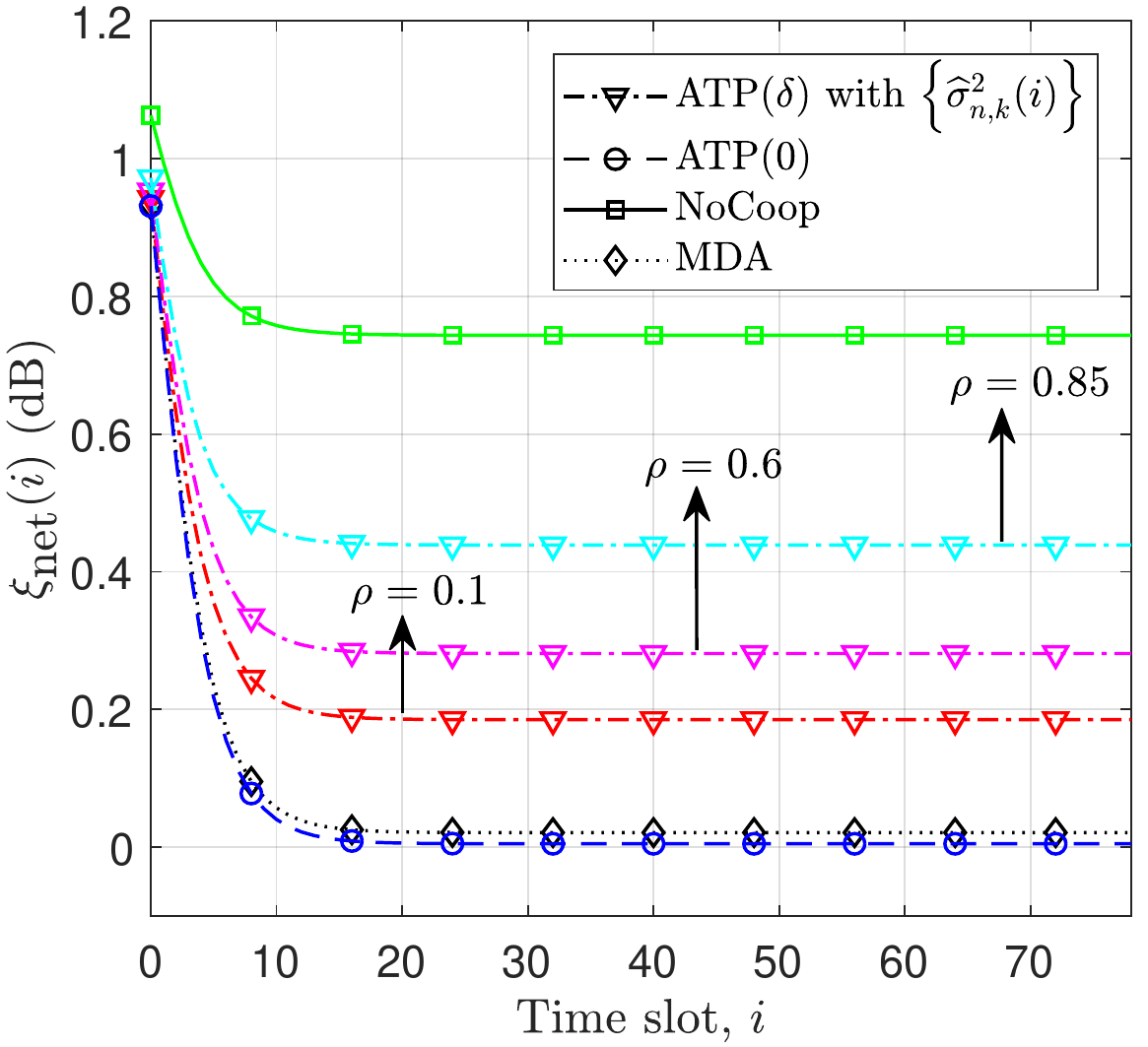}
\caption{Network inference privacy learning curves of MDA \cite{NasRicFer:J17}, NoCoop, the proposed ATP$(\delta)$ under different parameter settings for $\braces*{\delta_k}$, and ATP(0) in a line network. }
\label{Fig:privacy_line}
\end{figure}

\subsection{Dense Network}
As shown in \cref{Fig:dense_network}, we consider a dense network which consists of $N=12$ agents. The agents in the network are involved in $Q=4$ linear equality constraints, each of the form \cref{Eq:linear_eq}. The lengths of the unknown parameter vectors $\{\bmw_k^o\}$ are $\{M_k\equiv 3\}$.
The random data $\{\bmu_k(i),\bmv_k(i)\}$ are independent, normally distributed with zero mean, and white over time and space. The parameters $\{R_{u,k}, W_{kk}, \E\bmw_k^o,\sigma_{v,k}^2\}$ are adjusted to make $\{\mbox{SNR}_k\}$ as shown in \cref{Fig:SNR_dense}. For the ATP$(\delta)$ algorithm, we set the threshold values $\{\delta_k=0.1\,\trace{W_{kk}}\}_{k=1}^N$. The other parameter settings remain the same as those in the previous simulation. 

\cref{Fig:MSD_dense} shows the network MSD learning curves of the tested strategies, which are averaged over 20,000 independent realizations of $\{\bmw_k^o\}$. \cref{Fig:privacy_dense} plots the network inference error, defined by \cref{Eq:inference_privacy}. These results demonstrate that the proposed ATP$(\delta)$ algorithm is still able to balance the trade-off between network MSD and privacy protection in dense networks. \cref{Table:gain_to_loss} lists the ratios of privacy gain to accuracy loss in the line and dense network, where the privacy gain is the absolute gap between the steady-state network inference privacy errors of ATP(0) and ATP$(\delta)$ in dB, and the accuracy loss is the absolute gap between the steady-state network MSDs in dB. We observe that the ratio in the dense network is much higher than in the line network, which means that more privacy gains are obtained in the former case with the same quantity of network MSD loss. These results demonstrate that the proposed ATP$(\delta)$ algorithm performs better in dense networks than in line networks.
\begin{figure*}[!htb]
    \centering
    \begin{subfigure}[!htb]{0.45\textwidth}
        \centering
        \includegraphics[height=2.4in]{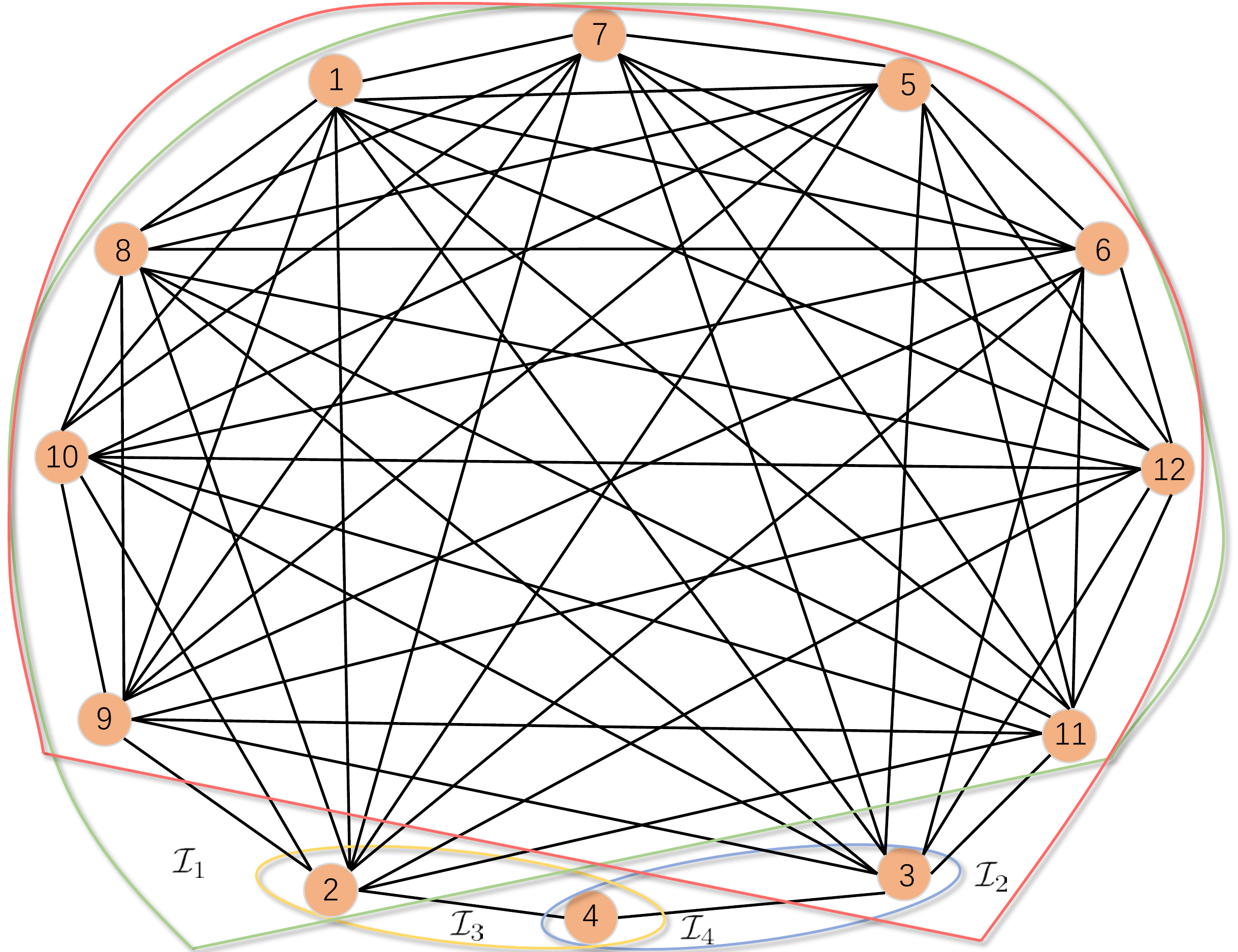}
        \caption{}
        \label{Fig:dense_network}
    \end{subfigure}%
    ~ 
    \begin{subfigure}[!htb]{0.5\textwidth}
        \centering
        \includegraphics[height=2.4in]{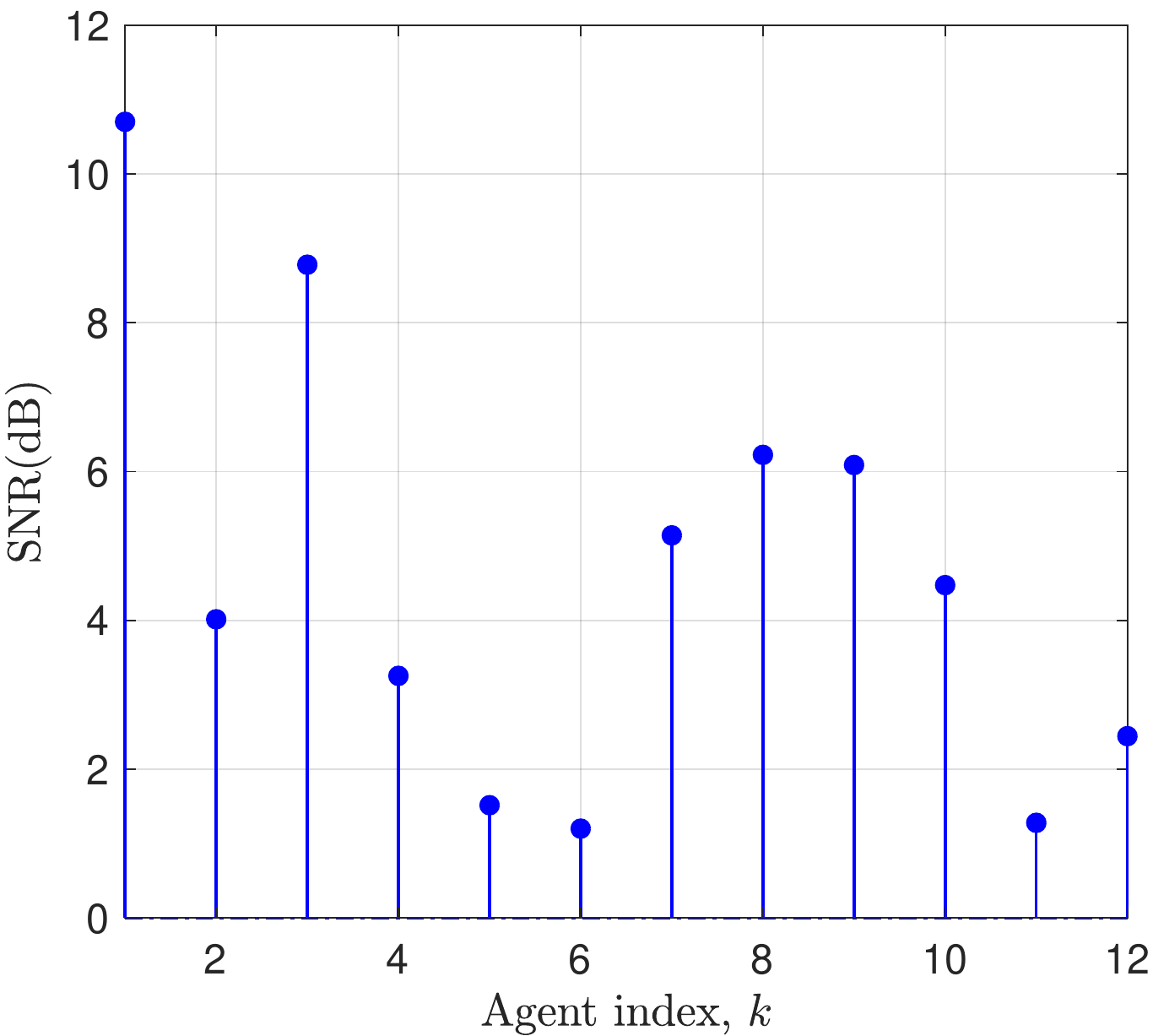}
        \caption{}
        \label{Fig:SNR_dense}
    \end{subfigure}
    \caption{(a) A dense network and (b) SNRs across the agents in a dense network.}
\end{figure*}
\begin{figure}[!htb]
\centering
\includegraphics[width=2.9in]{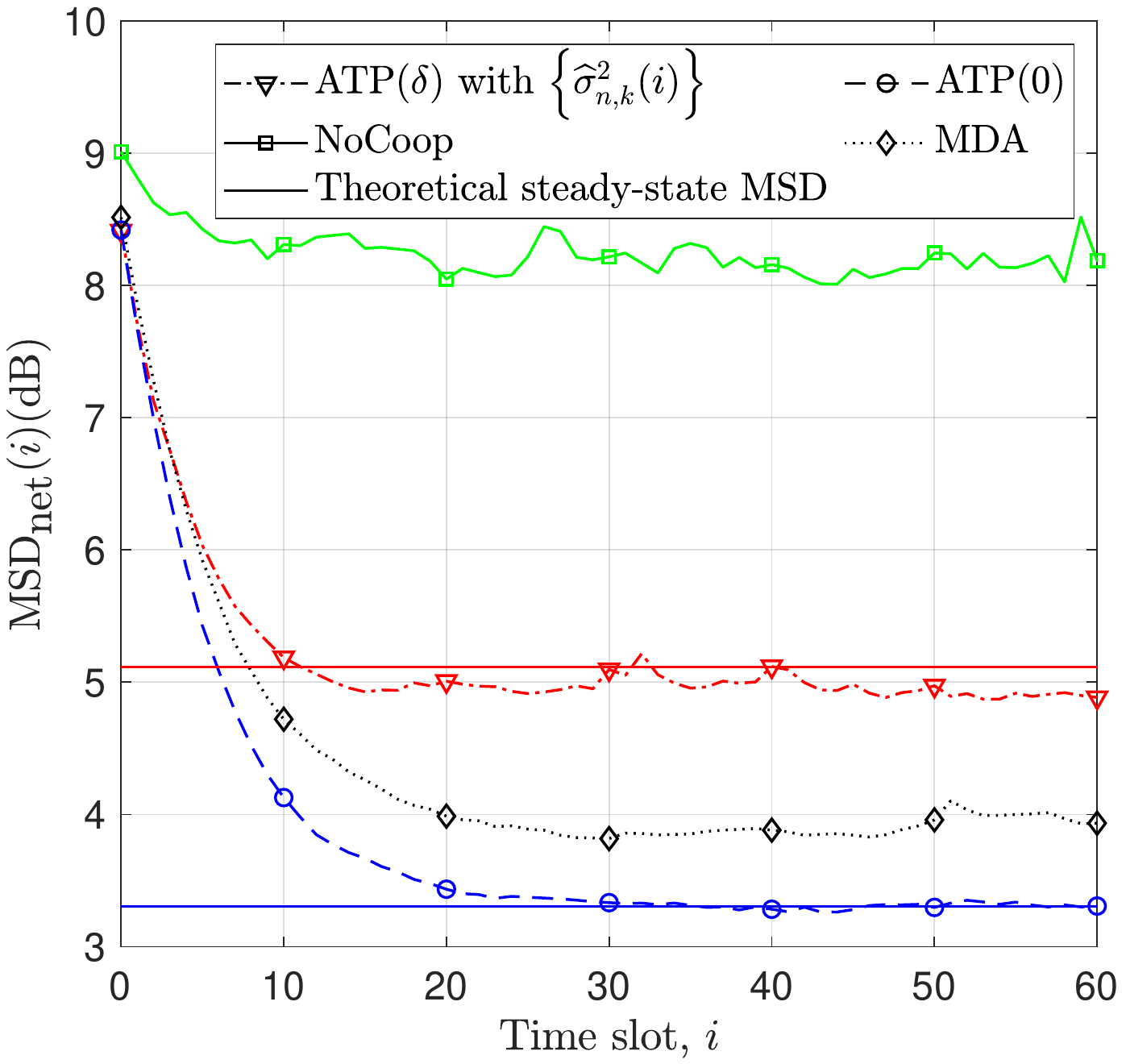}
\caption{Network MSD learning curves of MDA \cite{NasRicFer:J17}, NoCoop, the proposed ATP$(\delta)$ and ATP(0) in a dense network.}
\label{Fig:MSD_dense}
\end{figure}
\begin{figure}[!htb]
\centering
\includegraphics[width=3.0in]{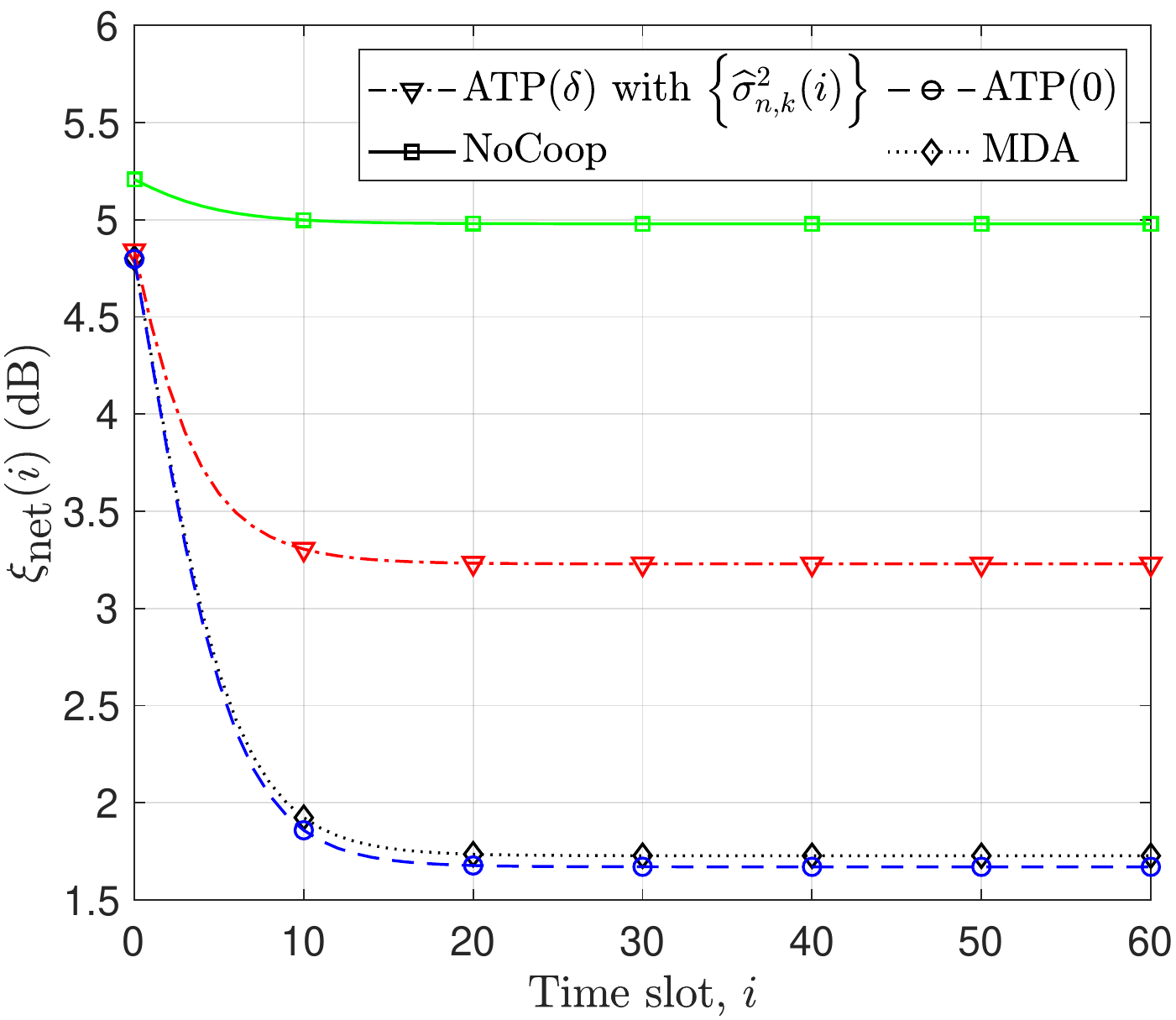}
\caption{Network inference privacy learning curves of MDA \cite{NasRicFer:J17}, NoCoop, the proposed ATP$(\delta)$ and ATP(0) in a dense network. }
\label{Fig:privacy_dense}
\end{figure}
\begin{table}[!htb]
    \centering
    \caption{Ratio of Privacy Gain to Accuracy Loss}
    \label{Table:gain_to_loss}
    \renewcommand{\arraystretch}{1.5}
    \scalebox{1.2}{
    \begin{tabular}{|c|c|c|c|}
    \hline
    \multicolumn{3}{|c|}{Line Network}        &  Dense Network\\
    \cline{1-3}
     $\rho=0.1$    & $\rho=0.6$ & $\rho=0.85$&\\
     \hline
     0.09          &0.09        &  0.06      &0.86\\
     \hline
    \end{tabular}}
\end{table}

\subsection{Tracking Performance}
As shown in \cref{Fig:network_topology}, we consider the case where there are $N=6$ agents in the network. The agents in the network are involved in $Q=5$ linear equality constraints, each of the form \cite{NasRicFer:J17}:
\begin{align*}
\sum_{k\in\calI_q}d_{qk}\bmw_k^o+b_q\bone_{2\times1}=0_{2\times1},\quad q= 1,\ldots,Q
\end{align*}
with the scalar parameters $\{d_{qk},b_q\}$ randomly selected from $[-3,-1]\cup[1,3]$. The lengths of the unknown parameter vectors $\{\bmw_k^o\}$ are $\{M_k\equiv 2\}$.
The random data $\{\bmu_k(i),\bmv_k(i)\}$ are independent, normally distributed with zero mean, and white over time and space. The parameters $\{R_{u,k}, W_{kk}, \E\bmw_k^o,\sigma_{v,k}^2\}$ are adjusted to make $\{\mbox{SNR}_k\}$ as shown in \cref{Fig:SNR_k} at the first stage. Then in order to test the tracking performance of the proposed distributed and adaptive ATP($\delta$) algorithm, we increase values of the elements in $W_{kk}$ at time instant $i=75$, which makes $\{\mbox{SNR}_k\}$ as shown in \cref{Fig:SNR_k_2} at the second stage. For the step-size parameters, we set $\{\mu_k\equiv0.02\}$ for all the tested algorithms except MDA, where we set $\{\mu_k/j_k\equiv0.02\}$ to ensure the same step-size in the adaptation step for all the above-mentioned algorithms. In addition, for the ATP$(\delta)$ algorithm, we set the threshold values $\{\delta_k=0.6\,\trace{W_{kk}}\}_{k=1}^N$ according to the covariance matrix $W_{kk}$ at the first stage. 

\cref{Fig:MSD_net_i} shows the network MSD learning curves of the tested strategies, which are averaged over 1,000 independent runs. We observe that by increasing values of the elements in $W_{kk}$, the proposed ATP$(\delta)$ algorithm with privacy noise powers $\braces*{\sigma_{n,k}^2(i)=\widehat{\sigma}_{n,k}^2(\infty)}$ converges to a higher MSD level at the second stage. This is because privacy noise powers become larger at the second stage according to \cref{Eq:lim_sigma_n_k_i}. We also note that the steady-state network MSD of the proposed distributed and adaptive ATP$(\delta)$ gets higher at the second stage. \cref{Fig:privacy} plots the network inference error, defined by \cref{Eq:inference_privacy}. We observe that the network inference privacy performance of both ATP$(\delta)$ algorithms are similar throughout the whole process. These results demonstrate that the proposed distributed and adaptive ATP$(\delta)$ algorithm is able to track changes of the covariance matrix $W_{kk}$. 
\begin{figure*}[!htb]
    \centering
    \begin{subfigure}[!htb]{0.3\textwidth}
        \centering
        \includegraphics[height=1.5in]{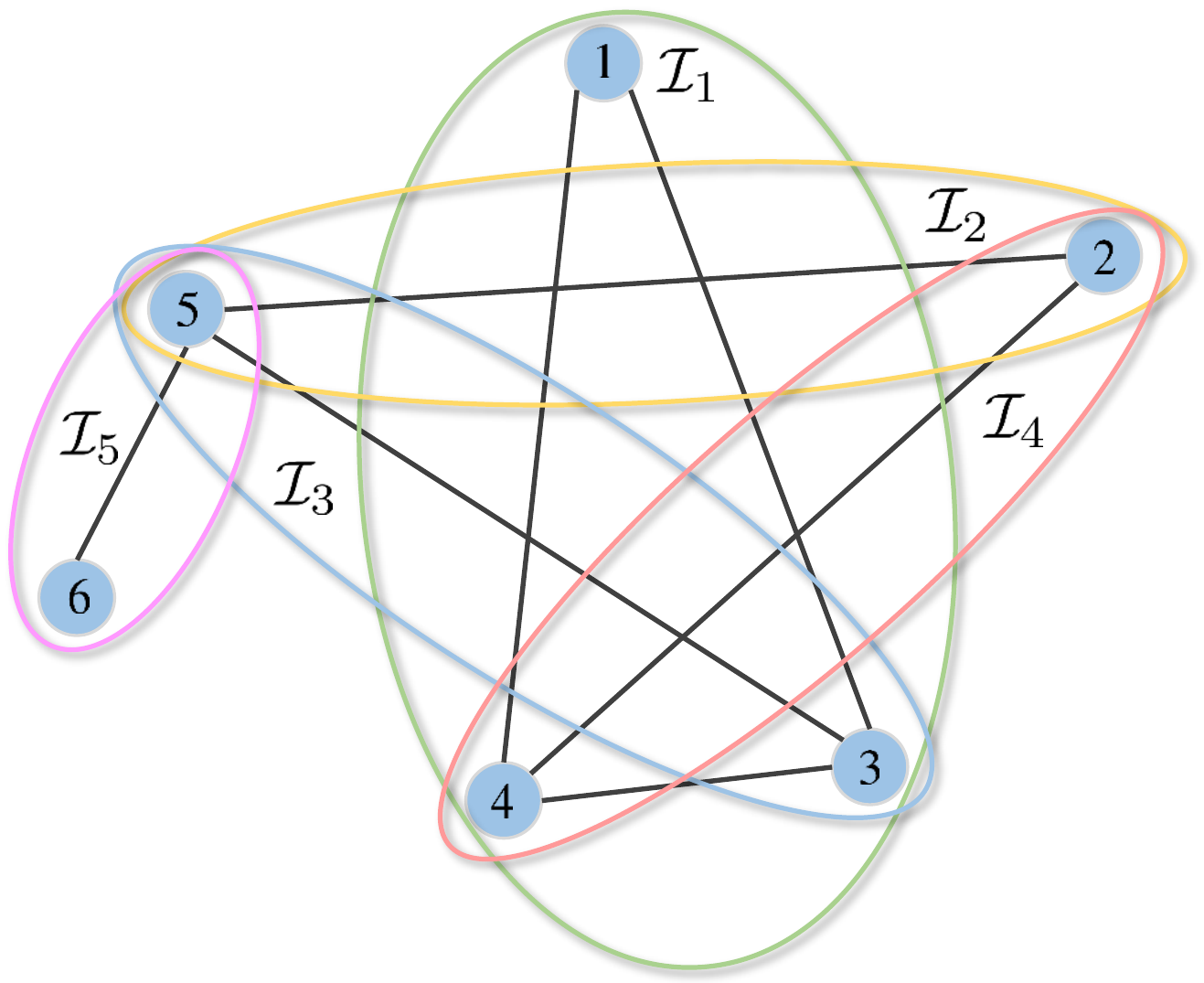}
        \caption{}
        \label{Fig:network_topology}
    \end{subfigure}%
    ~ 
    \begin{subfigure}[!htb]{0.3\textwidth}
        \centering
        \includegraphics[height=1.5in]{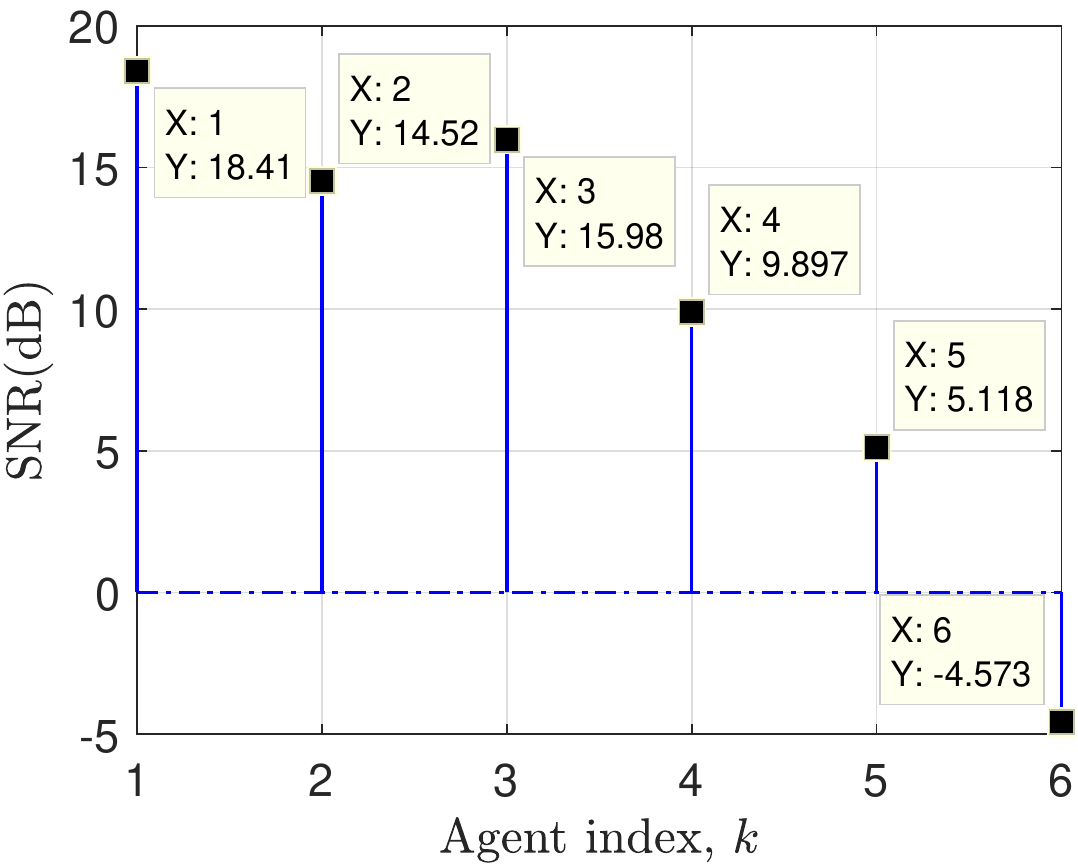}
        \caption{}
        \label{Fig:SNR_k}
    \end{subfigure}
    ~
    \begin{subfigure}[!htb]{0.3\textwidth}
        \centering
        \includegraphics[height=1.5in]{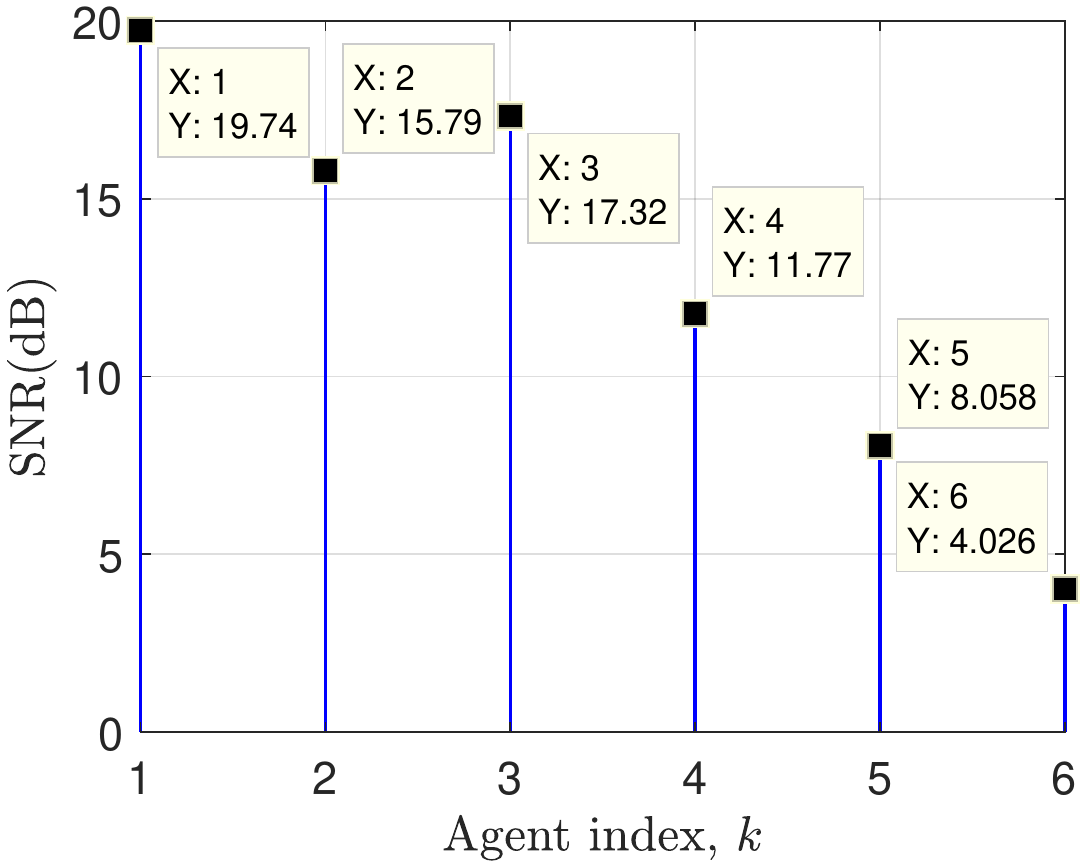}
        \caption{}
        \label{Fig:SNR_k_2}
    \end{subfigure}
    \caption{(a) Network topology, (b) SNRs across the agents at the first stage and (c) SNRs across the agents at the second stage.}
\end{figure*}
\begin{figure}[!htb]
\centering
\includegraphics[width=3.0in]{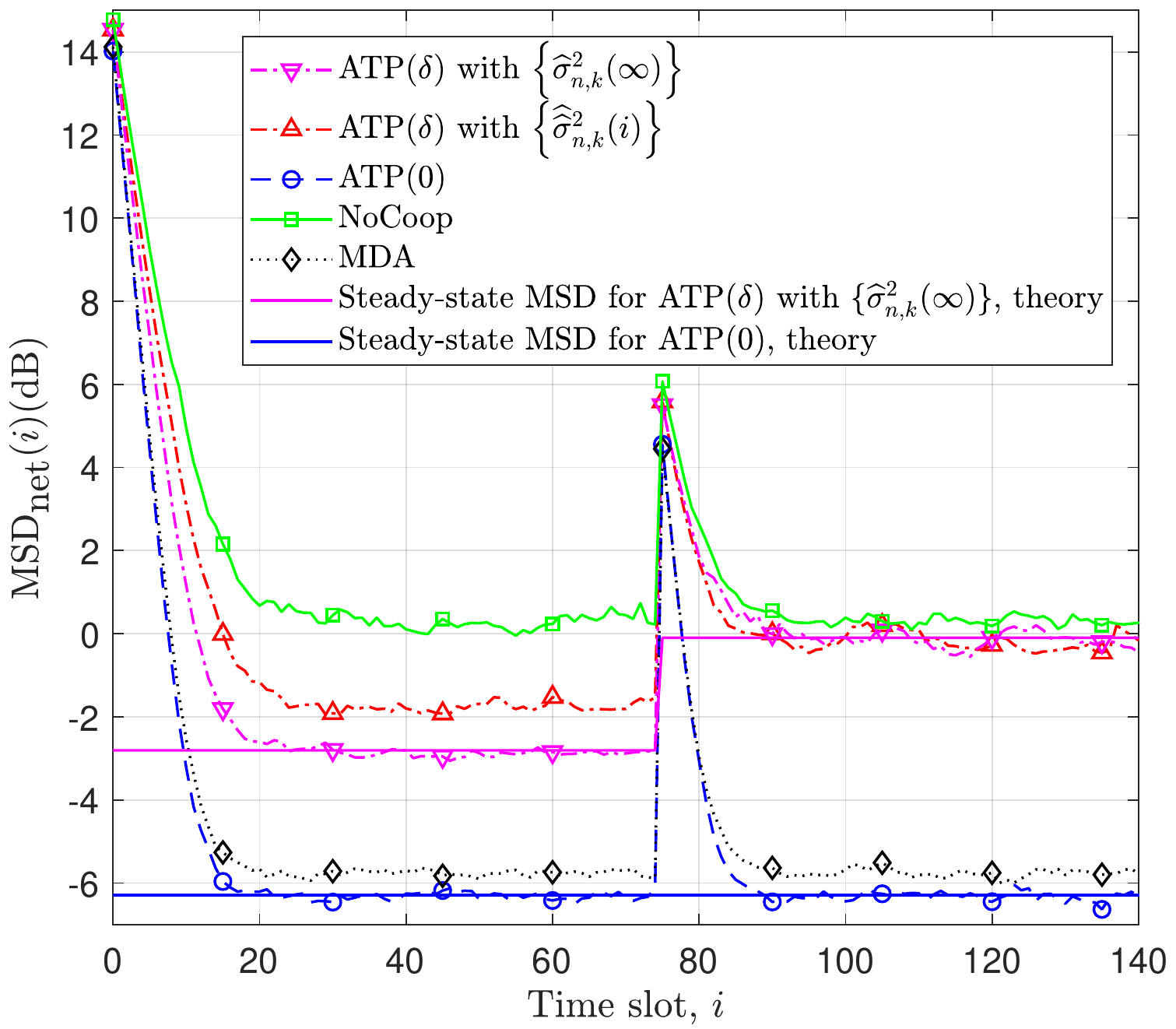}
\caption{Network MSD learning curves of MDA \cite{NasRicFer:J17}, NoCoop, the proposed ATP$(\delta)$ with privacy noise powers $\braces*{\hsigma_{n,k}^2(\infty)}$ as shown by \cref{Eq:lim_sigma_n_k_i} and $\braces*{\widehat{\hsigma}_{n,k}^2(i)}$ as shown by \cref{Eq:adaptive_privacy_noise_power}, and ATP(0).}
\label{Fig:MSD_net_i}
\end{figure}
\begin{figure}[!htb]
\centering
\includegraphics[width=3.0in]{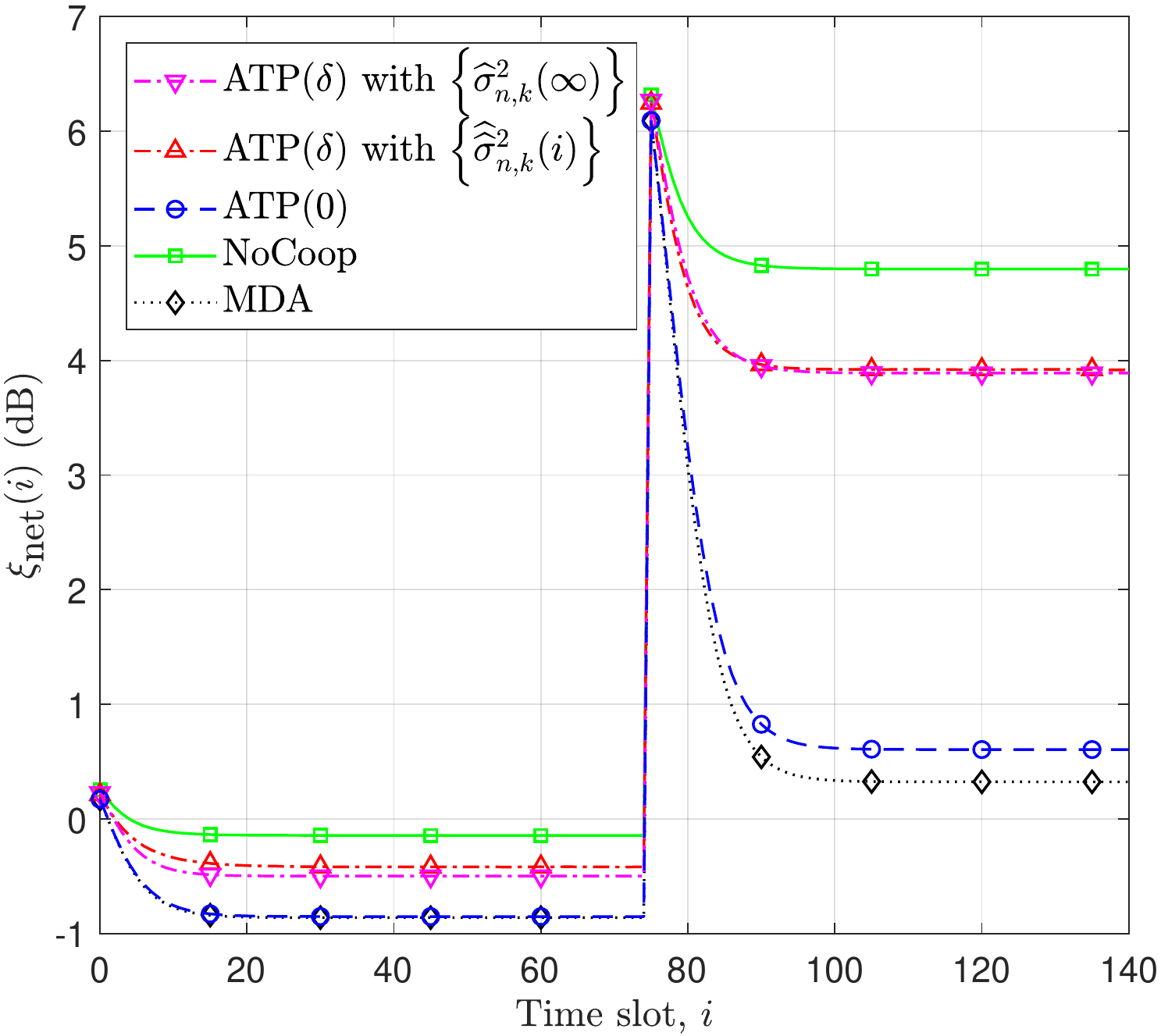}
\caption{Network inference privacy learning curves of MDA \cite{NasRicFer:J17}, NoCoop, the proposed ATP$(\delta)$ with privacy noise powers $\braces*{\hsigma_{n,k}^2(\infty)}$ as shown by \cref{Eq:lim_sigma_n_k_i} and $\braces*{\widehat{\hsigma}_{n,k}^2(i)}$ as shown by \cref{Eq:adaptive_privacy_noise_power}, and ATP(0). }
\label{Fig:privacy}
\end{figure}

\section{Conclusion}\label{Sect: Conclusion}
We have developed a privacy-preserving distributed strategy over linear multitask networks, which is able to protect each agent's local task by adding a privacy noise to its local information before sharing with its neighbors. We proposed a utility-privacy optimization trade-off to determine the amount of noise to add, and derived a sufficient condition for the privacy noise powers in order to satisfy the proposed privacy constraints. Furthermore, we proposed a distributed and adaptive scheme to compute the privacy noise powers. We have studied the mean and mean-square behaviors and privacy-preserving performance. Simulation results demonstrated that the proposed scheme is able to balance the trade-off between the network MSD and network inference privacy. 

Future work concerns privacy issues over multitask networks where agents' local parameters of interest consist of common parameters within each neighborhood and individual parameters. Privacy-preserving distributed strategies will be developed to improve the estimation accuracy of common parameters at each agent via in-network cooperation with neighboring agents, as well as to protect individual parameters against privacy leakage.

\bibliographystyle{IEEEtran}
\bibliography{IEEEabrv,StringDefinitions,refs}

\newpage
\setcounter{page}{1}
\renewcommand{\thesection}{\arabic{section}}
\setcounter{section}{0}
\section{Proof of \cref{Thm:Sufficient condition}}\label{Sup:sufficient condition}

The $M_k\times M_k$ symmetric positive semi-definite matrix $X_{kk}(i)$ admits a spectral decomposition of the form:
\begin{align*}
X_{kk}(i) = T_k(i)\Lambda_k(i)T\T_k(i),
\end{align*}
where $T_k(i)$ is an orthogonal matrix and $\Lambda_k(i) = \diag\left\{\lambda_k^m(i)\right\}_{m=1}^{M_k}$ is a diagonal matrix consisting of the eigenvalues of $X_{kk}(i)$. Let 
\begin{align}\label{tildeU}
\widetilde{U}_{kk}(i)=T\T_k(i)U_{kk}\T(i) U_{kk}(i)T_k(i),
\end{align}
which is a symmetric positive semi-definite matrix. Let $\{\widetilde{u}_{kk}^m(i)\}_{m=1}^{M_k}$ be entries on the main diagonal of $\widetilde{U}_{kk}(i)$. Then, it follows that $\widetilde{u}_{kk}^m(i)\geq0$ for all $m =1,\ldots, M_k$. From the \gls{LHS} of (\ref{rewrittenCost}), we have 
\begin{align}
& \trace{U_{kk}(i)\left(X_{kk}(i) + \sigma_{n,k}^2(i)I_{M_k}\right)^{-1}U_{kk}\T(i)} \nn
& =\trace{U_{kk}(i)T_k(i)\parens*{\Lambda_k(i) + \sigma_{n,k}^2(i)I_{M_k}}^{-1}T\T_k(i)U_{kk}\T(i)}\nonumber\\
&\stackrel{\scriptstyle{\textrm{(a)}}}=\trace{\parens*{\Lambda_k(i) + \sigma_{n,k}^2(i)I_{M_k}}^{-1}\widetilde{U}_{kk}(i)}\nonumber\\
&=\sum_{m=1}^{M_k}\frac{\widetilde{u}_{kk}^m(i)}{\lambda_k^m(i)+\sigma_{n,k}^2(i)}\nn
&\leq \sum_{m=1}^{M_k}\frac{\widetilde{u}_{kk}^m(i)}{\sigma_{n,k}^2(i)}\nonumber\\
&=\frac{\trace{\widetilde{U}_{kk}(i)}}{\sigma_{n,k}^2(i)}\nonumber\\
&\stackrel{\fr{tildeU}}=\frac{ \trace{ T\T_k(i)U_{kk}\T(i) U_{kk}(i)T_k(i) } }{\sigma_{n,k}^2(i)}\nonumber\\
&=\frac{ \trace{ U_{kk}\T(i) U_{kk}(i) } }{\sigma_{n,k}^2(i)},\nonumber
\end{align}
where in step (a) we used the property $\trace{AB}=\trace{BA}$ for any square matrices $A$ and $B$ of compatible sizes. The theorem now follows from \cref{rewrittenCost}.

\section{Proof of \cref{Lem:steady-state variance}}\label{Sup:proof of steady-state variance}

We first show a preliminary result.
\begin{Lemma_A}\label{lem:calV}
Consider the matrix $\calV(i)$ defined by \cref{Eq:calV(i)}. If the step-size $\mu_k$ satisfies \cref{Cond:variance convergence} for all agents $k=1,\cdots,N$. Then, $\lim_{i\to\infty} \calV(i) = 0_{M\times M}$.
\end{Lemma_A}
\begin{IEEEproof}
We define $\calB(i)$ as shown by \cref{Eq:calB(i)}. Under \cref{Ass:measurement_noise,Ass:Independence_assumption,Ass:additive_noise}, it follows from (\ref{Eq:psierror_dynamics}) that 
\begin{align*}
\E[\bmw^o\tbpsi\T(i+1)] &=\E[\bmw^o\tbpsi\T(i)]\calB\T(i)\nn
\E\tbpsi(i+1) &= \calB(i)\E\tbpsi(i)
\end{align*}
for any $i\geq 0$, and
\begin{align}
\calV(i+1)
&=\E[(\bmw^o-\E\bmw^o)(\tbpsi(i+1)-\E\tbpsi(i+1))\T]\nn
&= \E[\bmw^o\tbpsi\T(i+1)]-\E\bmw^o\brk*{\E\tbpsi(i+1)}\T \label{Eq:CalV_split}\\
&= \E[\bmw^o\tbpsi\T(i)]\calB\T(i) - \E\bmw^o\brk*{\E\tbpsi(i)}\T\calB\T(i)\nn
&= \calV(i)\calB\T(i).\label{eq:calV}
\end{align}
We next show that $\calB(i)$ is stable, i.e., $\rho\parens*{\calB(i)}<1$. We have 
\begin{align*}
\rho\parens*{\calB(i)}
&\leq\norm{\calB(i)}\nn
&\leq\norm{\calP(i)}\,\norm{I_{M}-\calM\calR_{u}}.
\end{align*}
From \cref{Eq:calRue,Eq:calM}, since $I_{M}-\calM\calR_{u}$ is a symmetric matrix, we have
\begin{align*}
\norm{I_{M}-\calM\calR_{u}}=\rho\parens*{I_{M}-\calM\calR_{u}}.
\end{align*}
The eigenvalues of $I_M-\calM\calR_u$ are $\{1-\mu_k\lambda(R_{u,k})\}_{k=1}^N$, thus $\calB(i)$ is stable if
\begin{align}\label{Ineq:mean stabilitiy1}
\Big|1-\mu_k\lambda(R_{u,k})\Big|<\frac{1}{\norm{\calP(i)}}
\end{align}
holds for all agents $k=1,\ldots,N$. By solving the inequality in \cref{Ineq:mean stabilitiy1} for $\mu_k$, we arrive at the desired sufficient condition \cref{Cond:variance convergence}. From \cref{eq:calV}, the lemma follows.
\end{IEEEproof}
We return to the proof of \cref{Lem:steady-state variance}. We have
\begin{align}\label{Eq:U_kk_calculation}
U_{kk}(i) &\stackrel{\ft{(a)}}=\E[\left(\bmw_k^o-\E\bmw_k^o\right)\left(\bpsi_k(i)-\E\bpsi_k(i)\right)\T]\nn
&\stackrel{\ft{(b)}}
=W_{kk}-\brk*{\calV(i)}_{k,k},
\end{align}
where in (a) we used \cref{Ass:additive_noise} and in (b) the $W_{kk}$ is defined by \cref{Wkk}. From \cref{lem:calV}, as $i\to\infty$, $U_{kk}(i)\to W_{kk}$. The proposition now follows from \cref{Eq:sigma_nk_2}.

\section{Proof of \cref{Thm:inference privacy preservation}}\label{Sup:evaluation of privacy}

Under \cref{Ass:additive_noise}, it follows from \cite[p.66]{Say:B08} that
\begin{multline}\label{Eq:w_hat_coop_noise}
\E\norm{\bmw_k^o-{\hbmw}_{k\mid\{\psi'_k,\psi_\ell\}}(i)}^2=
\Tr\bigg(W_{kk} \\-U_{k\braces*{k,\ell}}(i)\parens*{X_{{\braces*{k,\ell}}{\braces*{k,\ell}}}(i)+\widecheck{R}_{n,k}(i)}^{-1}U_{k\braces*{k,\ell}}\T(i)\bigg)
\end{multline}
which leads to \cref{Eq:xi_net_ATP_delta}. From \cref{Eq:U_k_kl,Eq:calV(i),Eq:calW}, we have \cref{Eq:R_wpsi_kkl(i)}. The recursion \cref{Eq:calV} is shown by \cref{eq:calV} in \cref{Sup:proof of steady-state variance}. Given \cref{Ass:measurement_noise,Ass:Independence_assumption,Ass:Regression_data}, it follows from \cref{Eq:adaptation_net} that
\begin{align}\label{Eq:EPsi_0}
\E\tbpsi(0)=(I_{M}-\calM\calR_{u})\E\bmw^o,
\end{align}
and
\begin{align}\label{Eq:Ewo_tpsi0}
\E[\bmw^o\tbpsi\T(0)] = \E[\bmw^o\left(\bmw^o\right)\T](I-\calM\calR_{u}).
\end{align}
We then arrive at the desired result \cref{intial} from \cref{Eq:CalV_split}. From \cref{Eq:Psi(i),Eq:X_kl_kl}, we have (\ref{Eq:R_psi_kl(i)}), and from \cref{Eq:EPsi_0}, we have \cref{Eq:Epsi_0}. 
It follows from \cref{adaptation,projection,Eq:psi'_l} that
\begin{dmath}[label={Eq:recursion_psi_i}]
\bpsi(i+1) = \left(I_{M}-\calM\bm\calR_{u}(i+1)\right) \calP(i)\bpsi(i) - \left(I_{M}-\calM\bm\calR_{u}(i+1)\right) f(i) + \left(I_{M}-\calM\bm\calR_{u}(i+1)\right) \bmq(i) +\calM\bmr_{du}(i+1)
\end{dmath}
for any time instant $i\geq 0$, with
\begin{align}\label{Eq:psi0}
\bpsi(0) = \calM\bmr_{du}(0),
\end{align}
and where the quantity $f(i)$ is defined by \cref{Eq:f(i)}. Taking expectations on both sides of (\ref{Eq:recursion_psi_i}) gives \cref{Eq:recursionpsi_i}
for any time instant $i\geq 0$, and where we used \cref{Ass:measurement_noise,Ass:Regression_data,Ass:Independence_assumption,Ass:additive_noise,Eq:Eq(i),Eq:Eg(i)}. Multiply both sides of (\ref{Eq:recursion_psi_i}) by its transpose from the right before taking expectations, then we have
\begin{dmath*}
\Psi(i+1) = \E\,\big[\left(I_{M}-\calM\bm\calR_{u}(i+1)\right) \calP(i)\Psi(i)
\calP\T(i)\left(I_{M}-\bm\calR_{u}(i+1)\calM\right)\big] - \E\,\big[\left(I_{M}-\calM\bm\calR_{u}(i+1)\right)\calP(i)\E\bpsi(i)f\T(i)\left(I_{M}-\bm\calR_{u}(i+1)\calM\right)\big] + \E[\left(I_{M}-\calM\bm\calR_{u}(i+1)\right)\calP(i)\E\bpsi(i)\bmr_{du}\T(i+1)]\calM- \E\,\big[\left(I_{M}-\calM\bm\calR_{u}(i+1)\right) f(i)\E\bpsi\T(i)\calP\T(i)\left(I_{M}-\bm\calR_{u}(i+1)\calM\right)\big] + C_1(i) 
- C_2(i)+ \E[\left(I_{M}-\calM\bm\calR_{u}(i+1)\right) \Gamma(i)\left(I_{M}-\bm\calR_{u}(i+1)\calM\right)] 
+\E[\calM\bmr_{du}(i+1)\E\bpsi\T(i)\calP\T(i)\left(I_{M}-\bm\calR_{u}(i+1)\calM\right)]-C_3(i)+C_4
\end{dmath*}
which makes use of \cref{Ass:measurement_noise,Ass:Regression_data,Ass:Independence_assumption,Ass:additive_noise}. We then arrive at the desired result (\ref{Eq:recursion_vec_Psi_i}) for any time instant $i\geq 0$ by using the property in \cref{Eq:vect_product}. Initially at $i=0$, we have
\begin{align*}
\Psi(0) &\stackrel{\scriptstyle{\cref{Eq:psi0}}}=\calM\E[\bmr_{du}(0)\bmr\T_{du}(0)]\calM\nn
&\stackrel{\fr{Eq:r_du}}=\calM\E[\left(\bm\calR_{u}(0)\bmw^o+\bmg(0)\right)\left(\bm\calR_{u}(0)\bmw^o+\bmg(0)\right)\T]\calM\nn
&\stackrel{\ft{(a)}}=\calM\E[\bm\calR_{u}(0)\bmw^o\left(\bmw^o\right)\T\bm\calR_{u}(0)]\calM\nn
&\hspace{0.4cm}+\calM\E[\bmg(0)\left(\bmg(0)\right)\T]\calM\nn
&\stackrel{\ft{(b)}}=\calM\E[\bm\calR_{u}(0)\,\E[\bmw^o\left(\bmw^o\right)\T]\bm\calR_{u}(0)]\calM+\calM\calG\calM,
\end{align*}
which is the desired result \cref{Eq:vec_Psi_0}, and where in (a) we used \cref{Ass:measurement_noise,Ass:Independence_assumption}, and in (b) we used \cref{Ass:Independence_assumption,Eq:calG}. The proof is now complete.

\end{document}